
\documentclass[journal]{IEEEtran}
\ifCLASSINFOpdf
\else
\fi
\hyphenation{op-tical net-works semi-conduc-tor}

\usepackage{comment}
\usepackage{booktabs}
\usepackage{balance}
\usepackage{graphicx}
\usepackage{amsmath,amssymb,amsfonts,amsthm}
\usepackage{xcolor}
\usepackage[ruled]{algorithm2e}
\SetKwComment{Comment}{/* }{ */}

\usepackage{mathtools}

\theoremstyle{definition}
\newtheorem{problem}{Problem}
\theoremstyle{definition}
\newtheorem{definition}{Definition}
\newtheorem{theorem}{Theorem}

\usepackage[symbol]{footmisc}

\usepackage{subfig}

\usepackage{float}
\newfloat{Problem}{htbp}{loe}
\floatname{Problem}{Opt. Problem}

\begin{document}
%
\title{
Sparing User Time with a Socially-Aware Independent Metaverse Avatar
}
%
%
%

\author{Theofanis P. Raptis, Chiara Boldrini, Marco Conti, and Andrea Passarella
\thanks{All authors are with the Institute of Informatics and Telematics (IIT) of the National Research Council (CNR), Italy. email: first.last@iit.cnr.it}
\thanks{This work was partially supported by SoBigData.it, which receives funding from the European Union -- NextGenerationEU -- National Recovery and Resilience Plan (Piano Nazionale di Ripresa e Resilienza, PNRR) – Project: ``SoBigData.it – Strengthening the Italian RI for Social Mining and Big Data Analytics'' -- Prot. IR0000013 -- Avviso n. 3264 del 28/12/2021. Chiara Boldrini was supported by PNRR - M4C2 - Investimento 1.4, Centro Nazionale CN00000013 - ``ICSC -- National Centre for HPC, Big Data and Quantum Computing'' -- Spoke 6, funded by the European Commission under the NextGeneration EU programme. Marco Conti and Andrea Passarella were partly supported by PNRR -- M4C2 -- Investimento 1.3, Partenariato Esteso PE00000013 -- ``FAIR - Future Artificial Intelligence Research'' -- Spoke 1 ``Human-centered AI'', funded by the European Commission under the NextGeneration EU programme.}
}
\maketitle
\begin{abstract} 
The Metaverse is redefining digital interactions by merging physical, virtual, and social dimensions, yet its effects on social networking remain largely unexplored. This work examines the role of \emph{independent avatars} (autonomous digital entities capable of managing social interactions on behalf of users), to optimize social time allocation and reshape Metaverse-based Online Social Networks.  
We propose a novel computational model that integrates a quantitative and realistic representation of user social life, grounded in evolutionary anthropology, with a framework for avatar-mediated interactions. Our model quantifies the effectiveness of a partial replacement of in-person interactions with independent avatar interactions. Additionally, it accounts for social conflicts and specific socialization constraints. We leverage our model to explore the benefits and trade-offs of an avatar-augmented social life in the Metaverse. Since the exact problem formulation leads to an NP-hard optimization problem when incorporating avatars into the social network, we tackle this challenge by introducing a heuristic solution.
%
Through simulations, we compare avatar-mediated and non-avatar-mediated social networking, demonstrating the potential of independent avatars to enhance social connectivity and efficiency. Our findings provide a foundation for optimizing Metaverse-based social interactions, as well as useful insights for future digital social network design.

\end{abstract}

\begin{IEEEkeywords}
Agent, Avatar, Ego network, Metaverse, Social cues, Social presence
\end{IEEEkeywords}

%
\IEEEpeerreviewmaketitle

\section{Introduction}

The Metaverse is a digitally constructed environment~\cite{HWANG2022100082} that has been envisioned in various ways over time but has recently attracted significant attention due to its potential to reshape human interactions in virtual and augmented spaces. It represents a highly interconnected ecosystem that blends physical, digital, and social elements, enabling users to engage in immersive experiences beyond what traditional online platforms offer. As an evolution of the Internet and social media, the Metaverse introduces a more interactive and dynamic way for individuals and entities to communicate and collaborate~\cite{10376184}. One of the defining aspects of the Metaverse is its ability to push the boundaries of cyber-physical connectivity, opening up new possibilities for social engagement and redefining the way people interact online~\cite{10368182}. Within this evolving digital landscape, \textbf{avatars and agents} serve as key components~\cite{9944868}. An \emph{avatar} 
is a virtual representation of a user, typically in the form of a graphical or 3D model, that allows individuals to navigate and interact within the Metaverse. Avatars act as extensions of their users, reflecting their identity and actions. In contrast, an \emph{agent} 
is an autonomous digital entity driven by artificial intelligence or scripted behaviors. Unlike avatars, agents can function independently, making decisions and executing actions without direct human control.  
The study of these virtual entities is gaining traction in social sciences, where researchers are exploring their impact on human experiences in the Metaverse. By examining how avatars and agents shape interactions, researchers aim to enhance the design and functionality of these digital spaces, making them more engaging and meaningful for users.
%
%
%
%
%
These new forms of interaction could significantly reshape online social networks, making them more dynamic and adaptable. 

We aim to investigate the unprecedented potential of this transformation, focusing particularly on how autonomous digital entities influence social networking dynamics in the Metaverse.
While discussions on the Metaverse are gaining traction, its computational social networking aspects remain largely unexplored (see discussion in Section~\ref{sec:relwork}).
%
Our work focuses on a previously unexplored dimension: the role of \emph{independent avatars} in augmenting users'  social networking capacity. We specifically investigate how these autonomous avatars can extend a user’s capacity for social engagement, thereby expanding the structure of their social network in the Metaverse.
Historically, avatars have been perceived as extensions of human users, manually controlled in real time to facilitate digital interactions~\cite{Sibilla_Mancini_2018}. However, in the Metaverse, we propose an expanded definition in which avatars transcend their traditional roles, operating autonomously when not under direct user control. We call these entities \emph{independent avatars}. These digital counterparts are capable of \textbf{navigating social interactions on behalf of users}, exhibiting agent-like behaviors that allow them to independently manage certain aspects of online socialization.
By delegating part of their social interactions to autonomous avatars, users could maintain their existing social ties with greater efficiency while freeing up their own time for new engagements. This redistribution of social effort could effectively expand users' overall socialization capacity, leading to significant changes in the structure and dynamics of Metaverse-based social networks~\cite{aral2023exactly}. 

In order to study this effect, this work proposes a novel computational model of how independent avatars could assist users in managing their social interactions. Our approach leverages quantitative models from social science such as \emph{ego networks}~\cite{DUNBAR201539} and models from human-centric Virtual Reality (VR) such as \emph{social presence}  \cite{10.3389/frobt.2018.00114}, 
and \emph{social cueing} \cite{doi:10.1080/10447318.2023.2193514} to build a microscopic characterization of social networking in the Metaverse era. 
By incorporating such concepts in our computational model, we aim to not only capture the complexities of users' social interactions but also provide a robust foundation for understanding how independent avatars can effectively navigate and augment the user's social time capacity, while allowing users to still control and be updated about the social interactions the avatars carry out.

The key contributions of the paper are the following.
\begin{itemize}
    \item We propose a novel computational model that combines a quantitative and realistic representation of user social life, grounded in evolutionary anthropology, with a framework for avatar-mediated interactions. This framework quantifies the effectiveness of avatar interactions relative to in-person interactions while also accounting for social conflicts and specific socialization constraints.
    \item We leverage this model to investigate the benefits and trade-offs of an avatar-augmented social life in the Metaverse.
    \item Since we prove that the exact problem formulation leads to an NP-hard optimization problem when incorporating avatars into the social network, we introduce a heuristic solution that aims at addressing the problem in a sorted social request scheduling manner.
    \item We conduct a holistic (simulative) performance evaluation of our heuristic solution, demonstrating that the use of independent avatars can significantly reduce social cost, improve fairness in social time allocation, and enhance system robustness under tight constraints. Our analysis spans multiple ego network sizes, conflict densities, and system parameters, revealing that substantial benefits can be achieved even with limited avatar availability.
\end{itemize}

The structure of this paper is as follows. In Section~\ref{sec:relwork}, we discuss the relevant related literature. In Section \ref{sec:model}, we introduce the reader to the modeling of ego networks, time fragmentation within a year, non-avatar-mediated and avatar-mediated socializing time capacities and debriefing times, social presence and cues, as well as social request deadlines and cost functions. In Section~\ref{sec:prob}, we formulate the problem of the optimization of the user's spare time and we prove that it is NP-hard. Following, in Section~\ref{sec::algo}, we design an efficient heuristic approximation of the optimal spare time, focusing on a spare time-centric process with sorted sequential request scheduling. In Section~\ref{sec::eval}, we present the simulation experiments we conducted to assess the impact of the introduction of an independent avatar to the social setting via the usage of our heuristic. Finally, in Section~\ref{sec::conc}, we conclude the paper.


\section{Related works}
\label{sec:relwork}

\subsection{Ego networks}

The \emph{social brain hypothesis} from anthropology~\cite{dunbar1998social} suggests that the number of stable social relationships a primate can maintain is limited by the size of its neocortex. For humans, this cognitive limit is approximately 150 individuals (known as \emph{Dunbar's number}) which reflects the mental effort required to sustain meaningful, high-quality relationships. These 150 connections exclude casual acquaintances and refer instead to emotionally significant ties, typically maintained through regular contact (e.g., birthday or holiday greetings).
These relationships are organized in concentric layers of increasing intimacy~\cite{hill2003social,Zhou2005}, as illustrated in Figure~\ref{fig:egonet}. The innermost layer, the \emph{support clique}, includes close family and best friends. This is followed by the \emph{sympathy group} (those whose death would cause deep grief), the \emph{affinity group} (friends, colleagues, and extended family), and the \emph{active network} (people contacted at least once per year). Some individuals also maintain an additional innermost layer, nested within the support clique, comprising on average 1.5 people with whom they have especially strong emotional bonds~\cite{Dunbar2015}.
Although originally developed for offline interactions, this model has been validated in digital communication as well. Dunbar's number and the layered structure have been observed in email~\cite{Haerter2012}, mobile phone calls~\cite{Miritello2013}, and online platforms such as Twitter and Facebook~\cite{gonccalves2011modeling,Dunbar2015}. In this work, we adopt the \emph{ego network model} as our reference framework for representing an individual’s surrounding social structure.

\vspace{-10pt}
\subsection{Social networking in the Metaverse}

While discussions on the Metaverse are gaining traction, its computational social networking aspects remain largely unexplored. Most existing studies have concentrated on virtual reality infrastructure and capabilities, without addressing the impact of novel Metaverse-specific elements—such as avatars and agents—on the structure of users’ social networks. For instance, research in~\cite{ARPACI2022102120} examines social sustainability in the Metaverse through a model based on personality traits, while~\cite{OH2023107498} explores whether the increased social presence in the Metaverse enhances supportive interactions among young users. In another line of work, \cite{10.1145/3511808.3557487} formulates the problem of user co-presence and occlusion-aware recommendation in virtual social settings, focusing on how technological advancements influence social presence. Similarly,~\cite{10.1145/3517745.3561417} systematically evaluates network performance across five major social virtual reality platforms, revealing fundamental scalability challenges that currently hinder the realization of a fully functional social Metaverse.

Historically, avatars have been perceived as extensions of human users, manually controlled in real time to facilitate digital interactions~\cite{Sibilla_Mancini_2018}. Aspects such as avatar creation \cite{BARTA2024108192}, security \cite{sandeep2025}, privacy \cite{Eltanbouly16022025} and human-computer interaction \cite{10086600} have been studied in the related literature. In our work, however, we propose an expanded definition in which avatars transcend their traditional roles, operating autonomously when not under direct user control. We call these entities \emph{independent avatars}. These digital counterparts are capable of \textbf{navigating social interactions on behalf of users}, exhibiting agent-like behaviors that allow them to independently manage certain aspects of online socialization.


Building upon the simplified model in~\cite{10637545}, this work introduces a more realistic representation of time constraints. Specifically, user encounters are now restricted to predefined time windows, and any delay incurs a social cost. This better reflects real-world social dynamics, such as spouses meeting for an anniversary, colleagues discussing urgent matters, or friends coordinating time-sensitive gatherings.
Additionally, we model mutually exclusive encounters—situations where attending one social event precludes participation in another. For instance, if a user goes to dinner with a friend, they cannot simultaneously meet a third person who is disliked by either party. This added realism significantly increases the complexity of the original model from~\cite{10637545}, which was formulated as a simple linear programming optimization. The new problem formulation is NP-hard, requiring a more sophisticated approach for its resolution.
To address this challenge, we propose a heuristic solution. The design and analysis framework from~\cite{10637545} remains a foundational element of our modeling approach and plays a crucial role in guiding the algorithmic design. 

\section{System model}
\label{sec:model}


In this section, we introduce the key elements of our model: (i) a quantitative framework for user’s social interactions (Sec.\ref{sec:model_social}) and (ii) a quantitative framework for avatar-mediated social interactions in the Metaverse (Sec.\ref{sec:model_avatar}). Then, in Section~\ref{sec:model_numbers}, we ground the parameters of our system model in realistic values derived from the related literature.

For (i), we build upon the ego network model from evolutionary anthropology, using it to compute sociality volume—the total time a user spends with their meaningful social relationships- and its distribution among the different alters. The ego network model provides a compact, high-level representation of a user’s social structure. However, in daily life, social relationships are dynamically constrained by factors such as specific timing requirements (e.g., spending time with family during Christmas in Western societies). To better capture these complexities, we complement the ego network model with a realistic and flexible representation of individual interaction sequences, where each social encounter is assigned a deadline by which an interaction should occur. To account for situations where maintaining multiple relationships on the same day is impractical (e.g., due to geographical separation or social conflicts), we introduce a conflict graph—a compact representation of mutually exclusive interactions within the user’s social network.
For (ii), we incorporate concepts from human-centric VR to model avatar-mediated interactions. Specifically, we consider social presence—the subjective experience of being with a ``real'' user and having access to their thoughts and emotions—and social cues, which include verbal and non-verbal signals used in communication. These elements allow us to quantify the effectiveness of avatar-mediated interactions compared to direct user interactions and to properly ``weigh'' the social impact of avatar interactions relative to in-person encounters. Specifically, this allows us to determine, for a given ego-alter relationship, the time required by an avatar to interact with the alter, such that the interaction is equally effective as if it were done by the ego themself.
Additionally, we account for the need for debriefing, where the user must synchronize with social developments after an avatar-mediated interaction. This ensures that the user remains updated on their relationships and is prepared for future direct interactions with the same alter.

Given (i) and (ii), we provide a framework to optimise the use of the ego social capacity - exploiting the presence of an avatar - such that the ego can maximise their spare capacity without compromising the effectiveness of their relationships with alters. We show how this spared time can be ``invested'' in establishing new relationships, ultimately expanding the ego network.

\subsection{A model for the user social life}
\label{sec:model_social}

\subsubsection{User's ego network}

\begin{figure}[t!]
\begin{center}
\includegraphics[width=0.4\columnwidth]{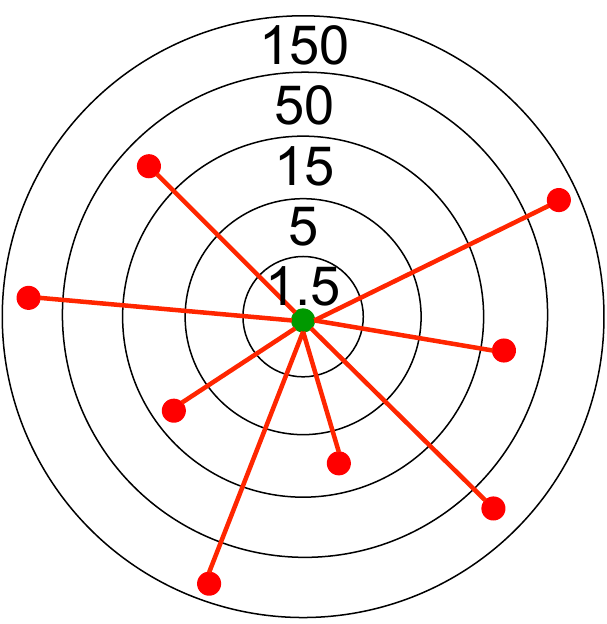}
\caption{Layered structure of human ego networks. The green node represents the ego, the red nodes represent its alters. The numbers give the size of the corresponding social circle.}\vspace{-10pt}
\label{fig:egonet}
\end{center}
\end{figure}

Following the Dunbar's model of ego networks (Figure~\ref{fig:egonet}), we assume that a user $u$ (the ego, in Dunbar's terminology) maintains a personal network that consists of the people $v_1, v_2, ..., v_n \in V$ (called alters) with whom $u$ interacts with regularly. 
Within this ego network, $u$ has different levels of closeness to each person. At the core of the network are a few people with whom $u$ has the closest relationships, while at the periphery there are acquaintances that $u$ only interacts with occasionally.
In an ego network, relationships are organized, according to their strength, in concentric circles (or layers). Typically, the ego network of $u$ consists of 4-5 layers.

User $u$ may face limitations in terms of time and resources for socializing with other individuals. A major challenge is that $u$ only has a limited amount of time (and cognitive capacity). 
This may mean that $u$ must prioritize certain connections over others, and may not be able to invest in as many new relationships as they would like. Another challenge is that expanding the network may require the ego to navigate unfamiliar social contexts or communities, which can be intimidating or uncomfortable. It may take time and effort to build trust and establish rapport with individuals who are not part of the ego's existing social circles. Finally, there is also the risk of spreading oneself too thin by trying to maintain connections with too many individuals. 
We assume that each alter $v \in V$ in the ego network of user $u$ requires a certain amount of socialization time $\tilde{x}_v$ with $u$. Note that the time spent with each alter generally varies from alter to alter. In principle, node $u$ should spend $\tilde{X}$ time (non-avatar-mediated) socializing, where

\begin{equation} \label{eqn:Tu}
\tilde{X} = \tilde{x}_1 + \tilde{x}_1 + .... + \tilde{x}_n.
\end{equation}
We refer to $\tilde{X}$ as \textbf{baseline socialization capacity}. A user $u$ typically has sufficient socialization capacity to maintain their ego network. 
 However, there may be circumstances where this capacity is reduced due to exogenous factors, such as illness or an exceptionally busy period at work. To ensure the flexibility of our model, we assume that each user has an \textbf{actual socialization capacity}, denoted by $\tilde{X}'$ (expressed in units of time per year), which ideally meets the threshold $\tilde{X}$, though this is not always guaranteed. Thus, it holds that 
 \begin{equation}
     \tilde{X}' \leq \tilde{X}.
 \end{equation}
%


\subsubsection{Modelling time and and inter-alter conflicts}

We consider a fixed time period, such as a year. The total socialization time for each alter can be fragmented and distributed across $k$ ordered time slots $d_i$ within this period (e.g., days of a year), where each slot has a specific duration $s(d_i)$.  
To model conflicts in physical socialization scheduling, arising from geographical constraints or other limitations, we introduce a conflict graph $G_c = (V_c, E_c)$. In this graph, each node represents an alter, and an edge $(i, j) \in E_c$ indicates that alters $i$ and $j$ cannot be assigned to the same time slot due to conflicting constraints.


\subsubsection{Social obligations and the cost of a delayed interaction}

The relationship between the user and an alter $v$ is made up of several social encounters, each with a certain duration. 
Each alter $v$ comes with a set of predefined social requests with deadlines on the time slots of the user. The social requests of alter $v$ are formulated as a list $R_v$ of triplets $(r^j_v, d'^j_v, d''^j_v)$, in which $r^j_v$ corresponds to the size (in terms of time) of the request $j$ of alter $v$, $d'^j_v$ the time slot that this request can start being served, and $d''^j_v$ the deadline until this request can be served without a socialization cost. 
A request $j$ that is served outside the time window $[d'^j_v, d''^j_v]$ receives a socialization penalty, according to a cost function $c(d''^j_v, d)$, where $d$ is the time slot that it was eventually served.

\subsection{A model for avatar-mediated social interactions}
\label{sec:model_avatar}


We assume that user $u$ owns an independent avatar $a$ that can operate in the Metaverse also independently from $u$ (e.g., when $u$ is offline or otherwise occupied). 
Given that avatar $a$, when operating independently in the Metaverse, can contribute a fixed amount of socializing time $Y$ (avatar-mediated) to the current activities of $u$, we assume that $a$ can virtually augment $u$'s socializing time and therefore
enhance $u$'s socializing time across the existing alters in the ego network\footnote{Example: Consider a user $u$ who has limited daily time for socializing but wants to maintain active connections with several friends. Without an avatar, $u$ must personally schedule and attend every interaction. With an avatar $a$, $u$ can delegate some of these interactions: For example, $a$ might check in with a friend $v$ in a virtual café, exchange brief updates, and later summarize the key points to $u$ during a short debriefing session. We assume that only $u$ owns an avatar and that the interaction remains meaningful even if it is one-sidedly delegated. Avatar-to-avatar socialization, while possible, is not considered in our model. This setup enables $u$ to scale social engagement while retaining cognitive involvement through bounded debriefing.},
partitioning its time in a non-uniform manner, with\footnote{It is therefore apparent that during the activity delegation from $u$ to $a$, $u$ might decide to reduce the amount of non-avatar-mediated time for alter $v$ and replace it with an amount of avatar-mediated time instead.}
\begin{equation}\label{eqn:Ta}
Y = y_1 + y_2 + .... + y_n.
\end{equation}

In order for $u$ to stay up-to-date with $a$'s activities and interactions, $u$ will need to sync with $a$ periodically to receive a \emph{debrief} of its socialization updates for each $v$. This synchronization is crucial to maintain $u$'s cognitive engagement in these social relationships. According to the ego network model, without this cognitive involvement, relationships may transition out of the ego network and become mere acquaintances, as $u$ would not spend \emph{any} cognitive resources on those relationships, which would inevitably fade out with time. 
Thus, despite the potential resource costs, the debriefing process is necessary for staying informed and engaged with the $a$'s activities, as well as effectively incorporating the social presence value achieved by $a$. In order for $u$ to get debriefed by $a$ for all alters in $V$, $u$ would need to spend 
    \begin{equation}
    Z = z_1 + z_2 + ... + z_n \label{eqn:Td}
\end{equation}
time (referred to as \textbf{debriefing time Z}). However, due to the fact that attention span (defined as the amount of time spent concentrating on a task before becoming distracted \cite{berger2018present}) can decrease in some given amount of time, leading to distractibility, we assume that the attention during the debriefing process is eventually uncontrollably diverted to another activity. We therefore impose an upper threshold of $Z_{\text{max}}$ on the debriefing time, with 
\begin{equation}
  Z \leq Z_{\text{max}}.  \label{eq:zmax}
\end{equation}
This threshold incurs the potential unavailability of $u$ to effectively get debriefed $\forall v \in V$. For simplicity in our analysis, we assume that the debriefing of an avatar socialization session must happen on the same day of the year as the day that the socialization session took place.


\subsubsection{Weighing avatar-mediated encounter by social presence}

In order to formulate the avatar socializing capacity $Y$, we need to better understand the social presence capacity of an independent avatar $a$ in the Metaverse. Social presence is broadly defined as the subjective experience of being present with a ``real'' user and having access to her thoughts and emotions. 
Different forms of networked communication systems offer different levels of social presence. 
Therefore, in order to achieve the same social presence value by using two different networked communication means, a user might need to spend different amounts of time. Social presence is often evaluated under different contexts (for example, user-user interaction, user-avatar interaction), and as such, some diversity in measures cannot be avoided. 
In the context of our paper, it is necessary to understand how different technological features of the Metaverse networked communication influence perceptions of social presence so as to be able to define a social presence function~$f_{s}$. Defining the function will, in turn, enable us to characterize the relation between the avatar-mediated communication time $y$ and the non-avatar-mediated communication time $\tilde{x}$.

In the social psychology literature, our notion of ``independent avatar'' would naturally fall in the category of ``agent''. Research works that have investigated the impact of the (perceived) agency on social presence typically assume the user as an actual person or a computerized character prior to the interaction. As the survey \cite{10.3389/frobt.2018.00114} nicely points out, the majority of the related literature converges to the conclusion that avatars generally elicit greater social presence value than agents, in the sense that users typically feel higher levels of social presence when a virtual counterpart was thought to be controlled by an actual person rather than by a computer program.

Digital media vary in their ability to transmit social cues and thereby facilitate social presence 
through technology-mediated interpersonal communication. Media that are high in social presence are likely better at facilitating social connectedness because they are closer to face-to-face communication compared to media that are lower in social presence. We naturally assume that non-avatar-mediated communication can ideally achieve a higher social presence value compared to avatar-mediated communication. Therefore the social presence value achieved with a unit of time spent in the first case ($\tilde{x}_v$) does not equal the value with a unit of time in the second case ($y_v$), and we need a mapping $f_{s}$ to convert the social presence of the avatar to the equivalent social presence of the user. In other words, in order for $u$ to have achieved the same amount of social presence value with a target alter $v$ when using an independent avatar $a$, $a$ would have needed to spend $y_v = f_{s}(\tilde{x}_v)$ time with alter $v$. According to \cite{2012appel}, $f_{s}$ can be modelled as a linear function of $\tilde{x}_v$:
\begin{equation}
    y_v = f_{s}(\tilde{x}_v) = \beta_v \cdot \tilde{x}_v,
    \end{equation}
where the proportionality factor $\beta_v$ is the \textbf{social presence conversion factor}. $\beta_v$ represents the factor by which the time spent by the avatar with alter $v$ should be increased, relative to the time spent by the ego, in order to compensate for the reduced social presence of the avatar. We discuss in Section~\ref{sec:model_numbers} how to set its value. 

\subsubsection{Quantifying the debriefing time with social cues and compression}

We now better formulate the debriefing process, which requires both time and attention from $u$. Depending on the frequency and complexity of $a$'s social interactions, the debriefing process may take varying amounts of time for each $v \in V$ and may need to be prioritized alongside other activities. For each $v \in V$, we assume that the required amount of time for $u$ to get debriefed is not more than the actual interaction time between $a$ and $v$ and that $z_v$ is a function of $y_v$, therefore:
\begin{equation}
z_v = f_{d}(y_v) \leq y_v.
\end{equation} 
The individual debriefing efficiency can be quantified through compression efficiency 
and avatar anthropomorphic and social cue measurements 
which can be expressed as follows:
\begin{equation}
z_v = c \cdot \delta \cdot y_v = \gamma \cdot y_v,
\label{eq:debriefing_efficiency}
\end{equation}
%
%
with $c$ being the \textbf{compression ratio}, $\delta$ being the \textbf{social cuing efficiency} that $a$ can achieve for debriefing. The compression ratio captures the fact that regular human interactions are interspersed with downtime in the conversation and are both partially inefficient (e.g., due to momentary misunderstandings) and redundant (because we tend to repeat concepts multiple times). Thus, a 1-hour conversation can often be summarised in a much shorter time. The compression ratio expresses this shrinking of time when summarising the salient points of a conversation. The social cuing efficiency $\delta$ captures the efficiency of an avatar to convey a certain message depending on its level of anthropomorphism and its ability to reproduce social cues (subtle interpersonal information, such as tone, gestures, or emotional nuance). A higher $\delta$ indicates an avatar that is more anthropomorphic or socially expressive, making the summary not only concise but also richer in communicative fidelity.


\subsection{Grounding our model variables}
\label{sec:model_numbers}

As with any computational model, we have defined multiple variables to capture different aspects of the system being modeled. Building on the related literature, we now discuss how these variables can be assigned realistic values to ensure the model's validity and applicability.

Let us begin by considering the socialization time available to the ego and the avatar. As a human, the ego’s daily life is naturally divided among various activities such as work, socialization, and other responsibilities. Consequently, only a fraction of the total 8,760 hours in a year can realistically be devoted to social interactions. According to~\cite{dunbar1998theory}, individuals spend approximately 20\% of their time on socialization. Therefore, the expected amount of time available for non-avatar-mediated social interactions is:
\begin{equation}
\mathbb{E}[\tilde{X}] = 8760 \cdot 0.2 = 1752.
\end{equation}
Vice versa, we can assume that the avatar could potentially spend all of its available time socializing, so, for a given 8,760 hours in a year, the value of non-avatar-mediated socializing time would be
\begin{equation}
Y \leq 8760.
\end{equation}
Thus, the avatar has more time to devote to socialization than the average human. However, as discussed before, its interactions with humans are less effective than interactions among humans. This efficacy is measured by the social presence conversion factor $\beta_v$. 
The expected $\beta_v$ can be estimated based on the findings reported in~\cite{2012appel}. In that study, participants rated social presence using a 5-point Likert scale. The average score for human–human (HH) interactions was 4.182, while human–computer (HC) interactions received an average score of 3.236 under the same experimental conditions. The ratio between these two values provides a conversion factor for translating the efficacy of HH interactions into the HC domain:
\begin{equation}
\mathbb{E}[\beta_v] = \frac{4.182}{3.236} \approx 1.29.
\end{equation}
In other words, replacing an HH interaction with an HC interaction would require the latter to last approximately 1.29 times longer to achieve an equivalent level of perceived social presence.


We now focus on the debriefing process between the ego and avatar $a$. We used three parameters: $Z_{max}$ captures the upper bound on the debriefing time determined by the attention span of the ego. $c$ and $\delta$ capture, together, the efficacy of the communication between the ego and the avatar. Let us start with $Z_{max}$. According to \cite{doi:10.1152/advan.00109.2016}, the threshold is 50 minutes per day, then, for a given year, we would have a debriefing time of
\begin{equation}
Z_{\text{max}} \approxeq 304 \textnormal{ hours}.
\end{equation}
We now focus on the compression ratio and social cueing efficiency, which are used to model the debriefing process between the ego and the avatar. According to known works on compression ratios, such as  \cite{10.1145/312624.312665} and \cite{ghalandari-etal-2022-efficient}, we can set 
\begin{equation}
\mathbb{E}[c] = 0.54.
\end{equation}
This implies that the summary of an average interaction can be delivered in only half the time of the original interaction.
Social cues can be measured in several directions (verbal, visual, auditory, etc.). \cite{doi:10.1080/10447318.2023.2193514} provides a quantitative survey-based assessment that captures both the visual and verbal efficiency of chatbots, using a 7-point Likert scale. As an example, in the best case, highly anthropomorphic chatbots with a human-like conversation style reach a score of~6 out of~7. We use these results to assign a value to $\delta$. Specifically, in the above example, we say that 
\begin{equation}
\mathbb{E}[\delta] = 7/6.
\end{equation}
Note also that $\delta$ is fixed for every ego, as it simply depends on the social presence properties of the avatar associated with the ego. Circling back to Equation~\ref{eq:debriefing_efficiency}, this means that $\gamma = c \cdot \delta$ can be set to $0.63$. 
For the convenience of the reader, the notation presented in this section is summarised in Table~\ref{tab:notation}.

\begin{table}[t!]
\centering
\caption{Summary of notation}
\label{tab:notation}
\begin{tabular}{@{}ll@{}}
\toprule
\textbf{Symbol} & \textbf{Description} \\
\midrule
$u$ & Ego user (central node of the ego network) \\
$v \in V$ & Alter (person in the ego network) \\
$\tilde{x}_v$ & Baseline (non-avatar-mediated) socializ. time with alter $v$ \\
$\tilde{X}$ & Baseline total socialization capacity $\sum_v \tilde{x}_v$ \\
$\tilde{X}'$ & Actual available socialization capacity ($\leq \tilde{X}$) \\
$d_i$ & Time slot $i$ (e.g., a day) \\
$s(d_i)$ & Duration of time slot $d_i$ \\
$G_c$ & Conflict graph of alters\\
$R_v$ & List of social requests of alter $v$ \\
$r^j_v$ & Duration of the $j$-th social request from alter $v$ \\
$d'^j_v$ & Earliest start time slot for $r^j_v$ \\
$d''^j_v$ & Deadline time slot for $r^j_v$ \\
$c(d''^j_v,d)$ & Social cost if the request is served at day $d$ after its deadline \\
$a$ & Independent avatar of user $u$ \\
$y_v$ & Avatar-mediated interaction time with alter $v$ \\
$Y$ & Total avatar-mediated socialization time $\sum_v y_v$ \\
$Z$ & Total debriefing time by user $u$ for avatar interactions $\sum_v z_v$ \\
$Z_{max}$ & Maximum allowed debriefing time (attention span limit) \\
$f_s(\cdot)$ & Social presence conversion function mapping $\tilde{x}_v$ to $y_v$ \\
$\beta_v$ & Social presence conversion factor ($y_v = \beta_v \tilde{x}_v$) \\
$f_d(\cdot)$ & Debriefing function mapping $y_v$ to $z_v$ \\
$z_v$ & Debriefing time for alter $v$ \\
$c$ & Compression ratio for debriefing time \\
$\delta$ & Social cueing efficiency of the avatar during debriefing \\
$\gamma = c \cdot \delta$ & Overall debriefing efficiency factor ($z_v = \gamma y_v$) \\
\bottomrule
\end{tabular}
  \vspace{-20pt}
\end{table}

\section{Optimization of the user's spare time:\\An NP-hard problem} \label{sec:prob}

Here, we define the problem of maximizing the user's spare time within a given time period. 
We prove that the problem is computationally intractable, so we can skip its formal formulation (e.g., as a program), and focus directly on the algorithmic design, either through heuristics or through approximation algorithms. 

\begin{problem} \label{pro::main}
Given (i) a user $u$, (ii) her independent avatar $a$, (iii) a set of $n$ alters and their respective initial non-avatar-mediated time partitioning $\tilde{x_1} + ... + \tilde{x_n} = \tilde{X}$, (iv) the time slots $d_i$, (v) a conflict graph $G$, and (vi) the social requests $R_v$: Allocate the social requests in the available time slots, and maximize the spare time of $u$ (i.e., $\tilde{X} - X - Z$) by also using $a$'s avatar-mediated time $Y$ for socialization in a resulting total combined time partitioning $(x_1 + y_1) + ... + (x_n + y_n) = X + Y$ with $Y \geq 0$. 
\end{problem}
We define the metaverse user spare time maximization decision problem as follows.

\begin{definition}[MUST-M]
The metaverse user spare time maximization (MUST-M) is the problem of deciding whether, given Problem \ref{pro::main} and a number $S \geq 0$, the user can spare at least $S$ time after using the avatar time. 
\end{definition}

We shall now prove that this problem is NP-hard by reducing the CLIQUE problem \cite{garey1979computers} to a special case of MUST-M. CLIQUE has been shown to be NP-complete as one of Karp's 21 NP-complete problems \cite{Karp1972}, and is as follows.

\begin{definition}[CLIQUE] \label{def::cli}
Given a graph $G_c=(V_c,E_c)$, and positive integer $K_c \leq |V_c|$, does $G_c$ contain a clique of sized $K_c$ or more? (i.e., a subset $V'_c \subseteq V_c$ with $|V'_c| \geq K_c$ such that every two vertices in $|V'_c|$ are joined by an edge in $E_c$)
\end{definition}

\begin{theorem}
MUST-M is NP-hard 
\end{theorem}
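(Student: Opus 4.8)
The plan is to reduce CLIQUE (Definition~\ref{def::cli}), which is NP-complete, to MUST-M, using only instances with a single time slot and one request per alter. Given a CLIQUE instance $(G_c,K_c)$ with $n=|V_c|$ and $1\le K_c<n$ (the case $K_c=n$ just asks whether $G_c$ is complete and is decidable in polynomial time), I would build the following MUST-M instance. The ego $u$ has $n$ alters $V=V_c$, one per vertex, each with baseline time $\tilde{x}_v=1$ (so $\tilde{X}=\tilde{X}'=n$) and a single request $(1,d_1,d_1)$ on one time slot $d_1$ whose duration $s(d_1)$ is taken large enough (say $3n$) never to bind. The conflict graph $G$ of the MUST-M instance is the \emph{complement} $\overline{G_c}$, so that a set of alters can be co-scheduled in person in $d_1$ exactly when it is an independent set of $G$, i.e.\ a clique of $G_c$ (the conflict graph models physical/geographical co-location constraints, hence it restricts only in-person encounters, not Metaverse avatar sessions). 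Finally, I fix rational social-presence and debriefing parameters with $\rho:=\gamma\beta_v<1$ for every $v$ (e.g.\ $\beta_v=3/2$, $\gamma=1/2$, so $\rho=3/4$), give the avatar enough time that $Y\le\beta n$ never binds, set the attention cap to $Z_{\max}=\rho\,(n-K_c)$, and ask whether $u$ can spare $S:=(1-\rho)(n-K_c)\ge 0$. All of this is computable in polynomial time.

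\textbf{Core of the argument.} The equivalence rests on a per-alter accounting. If for alter $v$ the ego keeps $x_v\in[0,1]$ of in-person time, preserving the effectiveness of the relationship forces the avatar to contribute $y_v=\beta_v(\tilde{x}_v-x_v)$ and hence a debriefing cost $z_v=\gamma y_v=\rho(1-x_v)$; thus alter $v$ contributes $(1-\rho)(1-x_v)$ to the spare time $\tilde{X}-X-Z$ and $\rho(1-x_v)$ to the total debriefing $Z$. Summing over $v$, the spare time equals $(1-\rho)\bigl(n-\sum_v x_v\bigr)$, while $Z\le Z_{\max}$ becomes $\sum_v x_v\ge K_c$. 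Moreover every alter with $x_v>0$ has a genuine in-person encounter that must be placed in $d_1$, so by the conflict graph the set $J=\{v:x_v>0\}$ is independent in $G=\overline{G_c}$, i.e.\ a clique of $G_c$, and $\sum_v x_v=\sum_{v\in J}x_v\le|J|$. Hence sparing at least $S$ requires $\sum_v x_v\le K_c$, which together with $\sum_v x_v\ge K_c$ forces $\sum_v x_v=K_c$ and therefore $|J|\ge K_c$ with $J$ a clique of $G_c$. Conversely, from any clique $J$ of size $K_c$ in $G_c$, putting $x_v=1$ for $v\in J$ and $x_v=0$ otherwise is a valid allocation (the in-person set $J$ is conflict-free, every request is served within its window in $d_1$, and $Z=Z_{\max}$) with spare time exactly $S$. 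So the constructed instance is a ``yes'' instance of MUST-M if and only if $G_c$ has a clique of size at least $K_c$ (when it does not, no feasible allocation exists at all, so a fortiori $u$ cannot spare $S>0$), and NP-hardness follows.

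\textbf{Main obstacle.} I expect the delicate point to be ruling out clever \emph{partial}-delegation strategies: a priori the ego could keep a tiny in-person sliver with many alters at once, and one must show this never beats the ``all-or-nothing on a clique'' solution. The observation that makes it go through is the chain $K_c\le\sum_v x_v\le|J|\le\omega(G_c)$, where $J$ is the in-person support and $\omega$ the clique number: the lower bound comes from the attention cap $Z_{\max}$, and the upper bounds from $x_v\le\tilde{x}_v=1$ and from $G$ being the complement of $G_c$. The remaining work is routine: verifying that $s(d_1)$, the avatar time budget, and $\tilde{X}'$ are chosen generously enough that none of them (nor the request windows, all met inside $d_1$) interferes with the equivalence, and that using rational parameters keeps MUST-M a bona fide decision problem.
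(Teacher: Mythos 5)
Your reduction is correct (modulo the paper's informal specification of MUST-M, whose ambiguities you resolve in a way consistent with the paper, e.g.\ conflicts binding only in-person encounters, as in constraint~\ref{opt:c22} which involves only the $W^x$ variables), but it takes a genuinely different route from the paper's proof. The paper switches the avatar off ($Y=0$), keeps the conflict graph equal to $G_c$ itself, gives each alter one unit request with a full-horizon window, provides $k=K_c-1$ slots of size $|V_c|$, and asks for $S=0$; hardness is then meant to come from a pigeonhole argument on the pure scheduling sub-problem: a $K_c$-clique of mutually conflicting alters cannot be packed into $K_c-1$ slots. You instead keep the avatar active, use a single slot, take the conflict graph to be the complement $\overline{G_c}$, and calibrate $Z_{\max}$ and $S$ so that the debriefing cap forces at least $K_c$ units of in-person time while the spare-time target caps it at $K_c$; the in-person support must be independent in $\overline{G_c}$, hence a clique of $G_c$ of size at least $K_c$, and conversely any $K_c$-clique yields a feasible schedule attaining $S$. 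What each buys: the paper's construction is simpler and shows the hardness is already present without any avatar parameters, whereas yours exercises the avatar-specific machinery ($\beta$, $\gamma$, $Z_{\max}$, treated as instance data, which the paper implicitly also allows since it sets $Y=0$). More importantly, your equivalence is tight in both directions (yes-instance iff $\omega(G_c)\ge K_c$), while the paper's instance is feasible exactly when the alters can be partitioned into $K_c-1$ conflict-free groups, i.e.\ when $\chi(G_c)\le K_c-1$; a $K_c$-clique implies infeasibility, but infeasibility does not imply a $K_c$-clique, so the paper's claim that solving its instance ``answers CLIQUE'' glosses over the backward direction (it is really closer to a coloring-type reduction). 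Your route avoids that delicate point entirely; the remaining bookkeeping you defer (slot size, avatar budget, $\tilde{X}'$) indeed checks out, since $K_c+\rho(n-K_c)\le n$ for $\rho<1$.
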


\begin{proof}
Consider the CLIQUE problem of Definition \ref{def::cli}. We will now create an instance of MUST-M from CLIQUE by performing the following actions: We set the available avatar-mediated time as $Y = 0$, the number of time slots as $k = K_c - 1$, the time slot size as $s(d_i) = |V_c|$ for all time slots, the size of all lists as $|R_s| = 1$ the size of all the requests\footnote{Here, by the value of $1$, we mean the smallest time value that cannot be fragmented any more; more fragmentation would not make sense from the socialization utility standpoint. In the formulation, the value $1$ could be replaced by some constant.} as $r^1_v = 1$, the deadlines as $d'^1_v = d_1$ and $d''^1_v = d_k$ for all deadlines, the conflict graph of the alters as $G = G_c$, and the non-avatar-mediated times as $\tilde{X} = \sum_v r^1_v$. Finally, we set $S = 0$, and therefore, any feasible solution gives a positive answer to MUST-M. Notice that a solution to this instance of MUST-M would provide an answer to CLIQUE, which means that CLIQUE $\leq_m$ MUST-M. This completes the proof.
\end{proof}

\section{An approximation of the optimal spare time} \label{sec::algo}

As shown in section \ref{pro::main}, MUST-M problem is proven to be NP-hard. Therefore, it can not be solved optimally in polynomial time. To tackle this, we design a heuristic algorithm according to the following strategy: In the first step, we simplify the problem by temporarily removing some of its constraints. Specifically, we assume the simplified version of MUST-M where there are no conflicts, there are no time slots, and there are infinite deadlines: The objective now is to just come up with the values of $X$ and $Y$. This makes the simplified version easier to handle, allowing us to find an optimal solution for it. Once we have this solution, we reintroduce the removed constraints and continue solving the rest of the problem using a spare-time-centric approach. Throughout this process, we make sure to build on the optimal solution from the first step, improving the overall quality of our final result.




\subsection{The optimal solution to the simplified version of MUST-M}

We assume that there are no conflicts, there are no time slots, and there are infinite deadlines. Following the problem definition and the discussion in the previous sections, in order for $u$'s 
time to be augmented, $u$ could delegate to $a$ a portion of the socializing activity time. In order to do so, $u$ would have to exchange some of the actual non-avatar-mediated time for implementing $a$'s debriefing process, with a final target to achieve an augmented social presence with 
$a$'s avatar-mediated time. In this case, $u$ would be able to have an ego network comprising the alters in $V$ 
in a resulting amount of time across $u$ and $a$ combined. 
%
%
%
The problem can now be framed as in Opt. Problem~\ref{prob:simpl_must-m}.

\begin{Problem}
\begin{align}
  \min\quad & \sum_v x_v + \gamma \cdot y_v  \label{opt:obj_fun}\\\
\text{s.t.\quad} & \nonumber \\
& x_v + \frac{1}{\beta_v} \cdot y_v \geq \tilde{x}_v \label{opt:c1} &   & \forall v  \\
& \sum_v x_v + \gamma \cdot y_v \leq \tilde{X}' \label{opt:c2}&   & \\
& \sum_v y_v \leq Y \label{opt:c3}&   & \\
& \sum_v \gamma \cdot y_v \leq Z \label{opt:c4}&   & \\ 
& x_v, y_v \geq 0 \label{opt:c5} &   & \forall v
\end{align}
\caption{Simplified version of MUST-M.}
\label{prob:simpl_must-m}
\end{Problem}

The objective function (Eq.~\ref{opt:obj_fun}) minimizes the sum of the non-avatar-mediated time and the debriefing time, thus maximizing the available non-avatar-mediated spare time on the user side. Constraints in Eq.~\ref{opt:c1} guarantee that the resulting effective non-avatar-mediated and avatar-mediated socializing time for each $v$ is not less than the initial non-avatar-mediated time, i.e., that the resulting relationships are not modified in terms of cognitive involvement with respect to the case when the avatar is not present. Constraint in Eq.~\ref{opt:c2} guarantees that the non-avatar-mediated time and the debriefing time do not surpass the total available time of the user. Constraint in Eq.~\ref{opt:c3} guarantees that the avatar-mediated time does not surpass the total available time of the avatar. Constraint \ref{opt:c4} guarantees that the debriefing process abides by the non-distractibility rule introduced with Eq.~\ref{eq:zmax}. Finally, constraints in Eq.~\ref{opt:c4} guarantee that time values are zero or positive.

\subsection{Feasibility}

\newcommand{\cc}[1]{\textcircled{\raisebox{-0.9pt}{#1}}}

The problem we have formulated in the previous section can be solved using any standard LP solver. For the purpose of our investigation, it is interesting to analyse the feasible region of the problem, where a solution can be obtained. We assume here that $\beta$ and $\gamma$ are constant and equal for all alters. 

Due to the minimization nature of the problem, the first constraint (Eq.~\ref{opt:c1}) can only hold as an equality. Consequently, the constraint can be used to perform a variable substitution in the objective function, as $x_v$ can be expressed as a function of $y_v$ and the initial physical time allocated to that alter $\tilde{x}_v$. Specifically, it holds the following:
\begin{equation}
    x_v = \tilde{x}_v - \frac{1}{\beta} y_v.
    \label{eq:xv_sub}
\end{equation}
Effectively, we just need to find a solution for $y_v$ and the values for $x_v$ will follow. This allows us to rewrite the optimization problem as shown in Opt. Problem~\ref{prob:feasibility}. 

\begin{Problem}
\begin{align} 
  \min\quad & \tilde{X} + \left(\gamma - \frac{1}{\beta}\right) \sum_v y_v \label{opt2:obj_fun} \\
\text{s.t.\quad} & \nonumber \\
& \left(\gamma - \frac{1}{\beta}\right) \sum_v y_v \leq (\tilde{X}' - \tilde{X}) \label{opt2:c2}&   & \\
& \sum_v y_v \leq \min \left\{ Y, \frac{Z}{\gamma} \right\} \label{opt2:c3}&   & \\
& 0 \leq y_v \leq \beta \tilde{x}_v \label{opt2:c1} &   & \forall v
\end{align}
\caption{Rewriting Opt. Problem~\ref{prob:simpl_must-m}.}
\label{prob:feasibility}
\end{Problem}


Let us now look at the first constraint (Eq.~\ref{opt2:c2}), which effectively requires that the debriefing time is always smaller than the time user $u$ should have spent directly interacting with alter $v$ (if this were not the case, user $u$ would be better off performing non-avatar-mediated interactions, as avatars would bring no benefit). Since, according to the constraint in Eq.~\ref{opt2:c1}, $y_v$ are always positive, the first constraint can only be satisfied if $\gamma \leq \frac{1}{\beta}$ (case \cc{A}) or if $y_v = 0$ for all $v$ (case \cc{B}). The discussion in Section~\ref{sec:model} tells us that in a realistic setting $\frac{1}{\beta} \sim 0.78$ and $\gamma \sim 0.63$, so we are operating in case \cc{A} and non-trivial solutions (i.e., $y_v > 0$) are possible.
The remaining constraint (Eq.~\ref{opt2:c3}) is a simple cap on the total avatar time. Its concrete meaning is that the total time the avatar spends socializing cannot be greater than the overall time in the avatar's capacity for socialization or than the maximum debriefing capacity of the user.

\begin{figure}
    \centering
    \includegraphics[width=.7\linewidth]{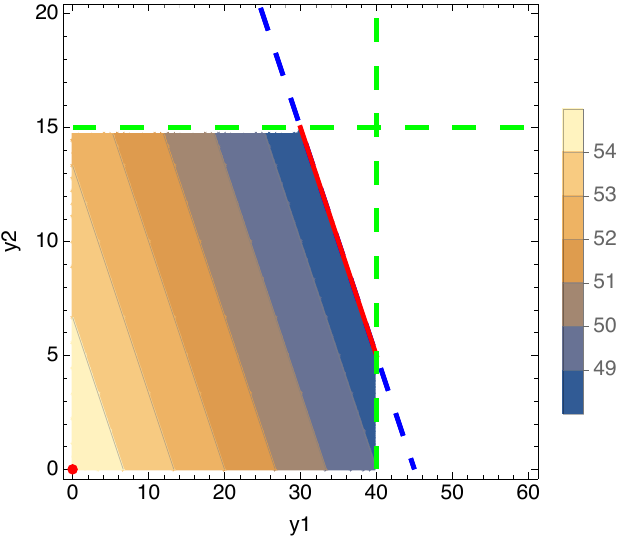} 
    \caption{Contour plot of the objective function in Eq.~\ref{opt2:obj_fun} when the socialization involves two alters. The dashed blue line corresponds to the constraint in Eq.~\ref{opt2:c3}, the dashed green lines to those in Eq.~\ref{opt2:c1}. The red line is the minimizing solution for case \cc{A}, the red point for case \cc{B}. The plot is obtained with $\tilde{x}_1 = 40, \tilde{x}_2 = 15, \gamma = 0.63, \frac{1}{\beta} = 0.78, Y = 45, Z = 45$.} \vspace{-10pt}
    \label{fig:feasible-region}
\end{figure}

In light of the reformulated problem, we can now revisit the objective function $\min \tilde{X} + (\gamma  - \frac{1}{\beta}) \sum_v y_v$. Let us distinguish the two cases, \cc{A} and \cc{B}, that we have already introduced. When $(\gamma  - \frac{1}{\beta}) > 0$ \cc{B}, the objective function is minimized when $y_v = 0$ for all $v$. Vice versa, when $(\gamma  - \frac{1}{\beta}) \leq 0$ \cc{A}, as in the realistic conditions discussed in Section~\ref{sec:model}, the objective function is minimized when $y_v$ take their highest values within the feasible region. As illustrated in Figure~\ref{fig:feasible-region}, in case \cc{A}, the objective function is minimized when $y_v$ lies on the red solid line, while in case \cc{B} when $y_v = 0$. 

The above discussion allows us to identify two main regimes in our avatar-mediated socialization problem. On the one hand, when the debriefing process is less efficient than the avatar socialization process (i.e., when $\gamma > \frac{1}{\beta}$), delegating socialization to the avatar is not convenient because the time saved from socialization is spent in debriefing. The optimal solution in this case is $y_v = 0$, which is equivalent to not using the avatar at all. Vice versa, when the debriefing process is more efficient than the avatar socialization process (i.e., when $\gamma \leq \frac{1}{\beta}$), the optimal solution consists in using the avatar as much as possible, or at least as much as the constraints allow. In short, with these linear constraints and objective function, depending on the socialization and debriefing efficiencies, the user should either not use the avatar at all or delegate as much as possible to it. This conclusion aligns intuitively with our problem formulation.

\subsection{Insights from the formulation}
\label{sec:results}

We wrapped the previous section with a qualitative finding: when the debriefing process is more efficient than the avatar socialization process, using the avatar is always convenient. In this section, we delve into a quantitative analysis of this benefit, focusing on the non-pathological case $X = \tilde{X}$ (i.e., we assume that the user is already able to sustain its ego network and we assess the spare time that can be gained by using the avatar).

To this aim, we leverage the model provided in~\cite{Conti2012} for generating realistic ego networks. We fix $\beta$ to $1.29$ in our optimization problem and we vary $\gamma$ between $[0,\frac{1}{\beta})$ (i.e., we remain in the range where case \cc{A} holds but we assume, as it is realistic, that there will be some variability in the $\gamma$ values). We test scenarios where the total avatar time $Y$ is greater, equal, or smaller than the total user socialization time without the avatar ($\tilde{X}$). In Figure~\ref{fig:spare-time}, we plot the amount of spare time gained by the user by leveraging the avatar. When $\gamma$ and $\frac{1}{\beta}$ are close (i.e., their difference is near zero), the gained spare time is very small. However, as the debriefing becomes more and more efficient with respect to the avatar socialization, then the time saved by the user becomes increasingly higher. 

To quantify the potential impact of the spare time, we compare the saved time against the time an average user spends on individual alters of their ego network layers. We extract these numbers from~\cite{Conti2012}, where statistics for three concentric layers are reported. The layers are referred to as the ``support clique'', ``sympathy group'', and ``active network'', with average cumulative sizes of 4.6, 14.3, and 132.5 alters, respectively. This structure deviates from the standard ego network structure due to the exclusion of the ``affinity group" layer, whose properties are not well-defined in the related literature.
From~\cite{Conti2012}, we obtain $t_{\textrm{active}}=8.81 h$, $t_{\textrm{sympathy}} = 38.72 h$, $t_{\textrm{support}}=74 h$.
Considering together the time spent on all the alters in each layer, we obtain $T_{\textrm{active}}=1041.51 h$, $T_{\textrm{sympathy}} = 375.83 h$, $T_{\textrm{support}}=340.65 h$. 
%

\begin{figure}
    \centering
    \includegraphics[width=.8\linewidth]{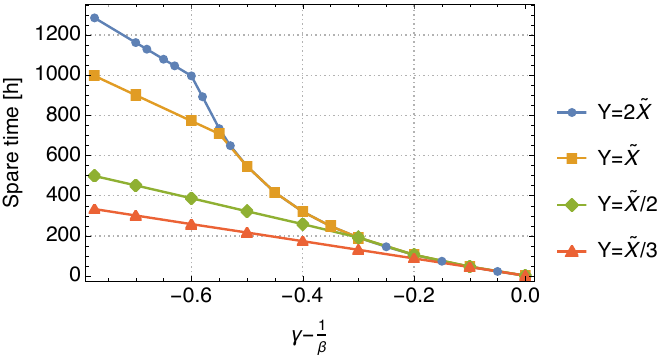}
    \caption{Spare time for an ego network with $|V| = 117, \tilde{X} = 1288h, Z = 300h, \beta = 1.29$.}  \vspace{-18pt}
    \label{fig:spare-time}
\end{figure}

It is evident that when the amount of time saved is high enough, users might have the opportunity to reinvest it in expanding their ego network circles and enhancing their social connections. For instance, with approximately 1000 spare hours, one could potentially double the active layer of their ego network, or with around 350 spare hours, strengthen their support clique. While fully characterizing the dynamics of this augmentation is beyond the scope of this work, and the relationship between time saved and new connections gained might not be as direct and straightforward, these initial insights suggest that avatars in the Metaverse have the potential to increase a user's available time -- a resource that could be redirected toward growing their personal social network. In theory, this could result in surpassing Dunbar's number of 150 meaningful relationships per ego, thus fostering a significantly enriched social life within the Metaverse -- an advancement not observed in online relationships within the social web~\cite{DUNBAR201539}.

\subsection{Reintroducing the original constraints}

Given the $X$ and $Y$ output values of the previous step, we design a formulation and a heuristic that distributes the time across time slots, trying to satisfy the social requests. The heuristic formulation is presented in Opt. Problem~\ref{prob:heuristic}, which uses the following social cost function\footnote{Drawing on the framework of social norms of \cite{cialdini_social_1998}, we model the social cost of delay using a threshold function. This reflects the idea that social groups often have implicit expectations about acceptable response times for social requests. Delays falling within this expected window incur minimal cost. However, exceeding a certain threshold, which represents a clear violation of descriptive (typical behavior) or injunctive (socially approved behavior) norms of responsiveness and helpfulness, leads to a significant increase in social cost due to social disapproval and potential damage to relationships. Also, drawing upon the foundational argument of \cite{d7019d62-38e8-3496-94e6-216994c35897} that the norm of reciprocity creates a moral obligation to return benefits, we model the social cost of delay as increasing linearly with time. This reflects the idea that the longer a social request remains unfulfilled, the greater the unmet obligation and the more prolonged the disruption to the expected social exchange, potentially leading to a steady increase in negative social perception and a weakening of reciprocal ties.}\footnote{Although the scheduling model is originally defined over a one-year time horizon, we extend the temporal scope to a two-year back-to-back setup when computing the social cost function. This extension is introduced exclusively to ensure fairness in handling social interaction requests that cannot be satisfied within the first-year allocation due to slot reservations. Specifically, each reserved slot allocated during the first year is virtually duplicated in the corresponding position of a second, identical year. Social interaction requests are then also evaluated with respect to their feasibility in this second-year instance. If a request cannot be accommodated within the first year, but can be matched in the second year under the duplicated reservation, the cost is still computed, albeit shifted temporally, by assigning it a time penalty equal to its corresponding time slot in the first year plus 365 days. This modeling trick ensures that the scheduling mechanism accounts for future compatibility and avoids penalizing agents solely due to year-bound limitations, thereby promoting long-term fairness across interactions without modifying the original one-year model structure.}:

\begin{equation}
f_c(v,i,j) = \left\{
\begin{aligned}
i + 365 - d''^j_v & : i < d'^j_v\\
 0 & : d'^j_v \leq i \leq d''^j_v\\
 i - d''^j_v & : i > d''^j_v
\end{aligned} \right.
\end{equation}

\begin{Problem}[t!]
\begin{align}
  \min\quad & \sum_v \sum_i \sum_j f_c(v,i,j) \cdot (W^x_{vij} + W^y_{vij})  \label{opt:obj_fun2}\\\
\text{s.t.\quad} & \nonumber \\
& \sum_v \sum_i \sum_j r^j_v (W^x_{vij} + W^y_{vij}) = X+Y \label{opt:c23}&   & \\
& \sum_v \sum_j r^j_v (W^x_{vij} + \gamma \cdot W^y_{vij}) \leq s(d_i) \label{opt:c24}&   & \forall i \\
& \sum_v \sum_j r^j_v \cdot W^y_{vij} \leq s(d_i) \label{opt:c25}&   & \forall i \\
& \sum_i \sum_j r^j_v \cdot W^x_{vij} = x_v \label{opt:c251}&   & \forall v \\
& \sum_i \sum_j r^j_v \cdot W^y_{vij} = y_v \label{opt:c252}&   & \forall v \\
& (W^x_{vij} + W^y_{vij}) \leq 1  & & \forall j  \label{opt:c21}\\
& (W^x_{v_1ij} + W^x_{v_2ij}) \leq 1 \label{opt:c22} &   & \forall (v_1, v_2) \in G, \forall i\\
& W^x_{vij}, W^y_{vij} \in \{0,1\} \label{opt:c27} &   &
\end{align}
\caption{Heuristic formulation}
\label{prob:heuristic}
\end{Problem}

Variables $W^x_{vij}$ and $W^y_{vij}$ can be set to either $0$ or $1$ according to constraints \ref{opt:c27}, and their value signifies whether the request $j$ of alter $v$ is served on day $i$, with either non-avatar-mediated time or avatar-mediated time respectively.  $W^x_{vij}$ and $W^y_{vij}$ can not be set to $1$ at the same time, due to constraints \ref{opt:c21}. 

The objective function \ref{opt:obj_fun2} minimizes the cost function $f_c$ value across all socialization requests. Constraints \ref{opt:c23} force the size of the socialization requests time to be equal to the sum of the available global optimized non-avatar-mediated and avatar-mediated time. Constraints \ref{opt:c24} force the sum of the daily socialization requests served by non-avatar-mediated interactions and debriefing time size to be equal to or less than the available physical time of the user. In a similar manner, but from the side of the avatar, constraints \ref{opt:c25} force the daily socialization requests served by non-avatar-mediated interactions' time size to be equal to or less than the available socialization time of the avatar. Constraints \ref{opt:c251} and \ref{opt:c252} force the non-avatar-mediated and the avatar-mediated time, respectively, for each alter to be equal to the equivalent individual optimal times. Finally, constraints \ref{opt:c22} impose the physical constraints of the conflict graph for every pair of alters. 


We design a spare time-centric heuristic that schedules social interactions by prioritizing non-avatar-mediated encounters, while ensuring that user time constraints and debriefing requirements are met. The process proceeds in five main steps, as detailed in Fig.~\ref{alg:adaptive_heuristic}. First, the heuristic computes the optimal yearly time allocation for each alter, balancing non-avatar- and avatar-mediated interactions by minimizing a weighted sum of their respective durations. Second, it estimates the social cost of each request across feasible days, based on factors such as proximity to deadlines and time availability. In the third step, requests are sorted by increasing cost for each day to prioritize low-cost interactions. The fourth step attempts to assign each request to a non-avatar-mediated slot, ensuring that the user's daily availability and conflict avoidance constraints are respected. Any remaining unscheduled requests are considered for avatar mediation in the fifth step, provided that sufficient avatar availability and user debriefing capacity exist. The heuristic returns a valid schedule only if all constraints, including total time budgets and debriefing limits, are satisfied. This approach favors early allocation of high-value interactions during the user's free time while gracefully degrading to avatar-mediated scheduling when necessary.

\begin{figure}
\caption{Spare time-centric with sorted sequential request scheduling}
\label{alg:adaptive_heuristic}

\SetAlgoLined
\SetKwInOut{Input}{Input}
\SetKwInOut{Output}{Output}

\Input{User socialization capacity $\tilde{X}'$, debriefing threshold $Z$, non-avatar-mediated time $Y$, list of requests $(v, j)$, daily timeslots size $s(d_i)$}
\Output{Non-avatar-mediated and avatar-mediated interactions schedule $W^x_{vij}, W^y_{vij}$}

\BlankLine
\textbf{Initialization:}\;
Initialize schedules $W^x_{vij} = W^y_{vij} = 0$, track available times, and create a list of requests\;


\BlankLine
\textbf{Step 1: Optimize the yearly times:}\;

Assign values to all $x_v$ and $y_v$, by optimally solving the program \ref{opt:obj_fun} ($\min  \sum_v x_v + \gamma \cdot y_v$)\;

\BlankLine
\textbf{Step 2: Calculate costs of each request for each day:}\;

\ForEach{request $(v, j)$}{
  \ForEach{day $i$ 
  }{
    Calculate the cost $f_c(v, i, j)$\;
  }
}

\BlankLine
\textbf{Step 3: Sort requests by minimum cost for each day:}\;

\ForEach{day $i$}{
Sort the list of requests by their minimum costs in ascending order\;
}

\BlankLine
\textbf{Step 4: Attempt non-avatar-mediated allocation, avoiding conflicts:}\;

\ForEach{day i}{
\ForEach{request $(v, j)$}{
  \If{there is no conflict of v in i}{
  Attempt to allocate it as non-avatar-mediated on its minimum cost day\;
  
  \If{allocation successful}{

    Set $W^x_{vij} = 1$\;
    
    
    Update available daily user time \;
    
    Mark the request as scheduled\;
  }
}
}
}

\BlankLine
\textbf{Step 5: Attempt avatar-mediated allocation:}\;

\ForEach{day $i$}{
\ForEach{remaining unscheduled request}{
    \If{there is available time for debriefing in $i$}{
  Attempt to allocate it as avatar-mediated on its minimum cost day\;
  
  \If{allocation successful}{
  
    Set $W^y_{vij} = 1$\;
    
    
    Update available daily avatar time \; 
    
    
    Update available daily user time (debriefing costs) \;
    
    Mark the request as scheduled\;
  }
}
}
}



\BlankLine
 \If{the constraints of program \ref{opt:obj_fun2} are met}{
\textbf{Output the schedule:}\;
Return the $W^x_{vij}$ and $W^y_{vij}$ values\;
}

\end{figure}

\section{Performance evaluation} \label{sec::eval}


In this section, we present the simulation experiments conducted to assess the impact of the introduction of an independent avatar to social setting. We first present the experimental setup and parameter setting, and then we present and discuss the simulation results. 

\subsection{Experimental parameter setup}

To perform a representative and insightful evaluation, we first generated 10,000 ego networks using a custom Python simulator that implements the ego network generation process described in~\cite{Conti2012}. Each ego network consists of a central user and a variable number of alters. To ensure coverage of diverse configurations, we selected three representative ego networks corresponding to the 10th, 50th, and 90th percentiles in terms of size. For the set of users in Fig.~\ref{fig:histogram}, these percentiles correspond to ego networks containing 68, 126, and 170 alters, respectively.

\begin{figure}[t!]
    \centering
    \includegraphics[width=0.7\columnwidth]{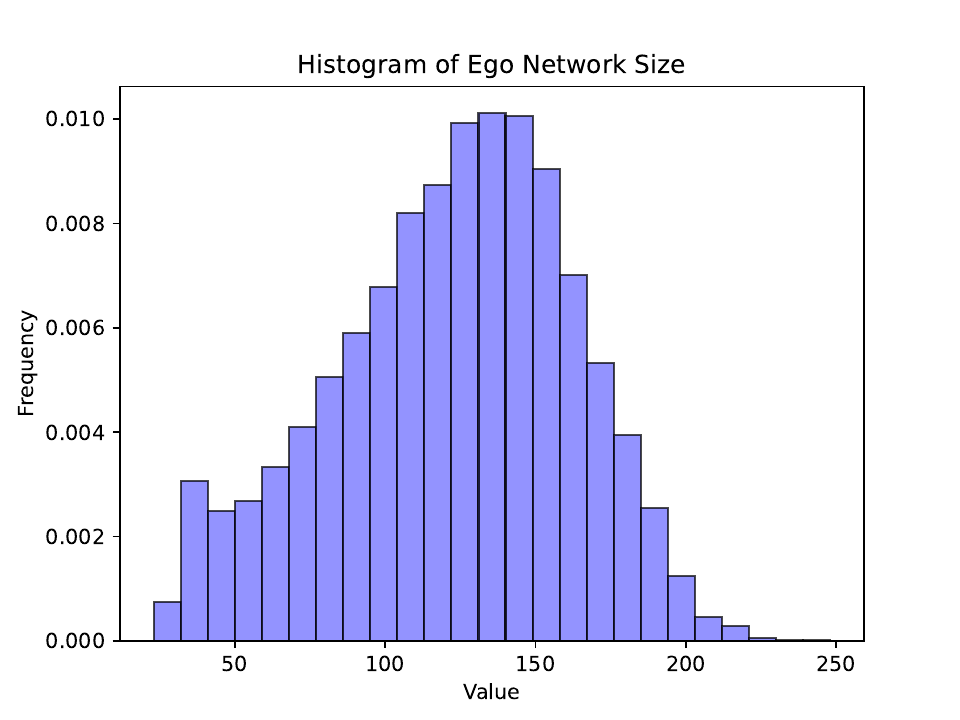}
    \caption{The histogram of the size of the ego networks generated by the simulator.}
    \label{fig:histogram}
      \vspace{-18pt}
\end{figure}

\begin{figure*}[t!]
    \centering
    \subfloat[$n = 68$.]{%
        \includegraphics[width=0.30\textwidth]{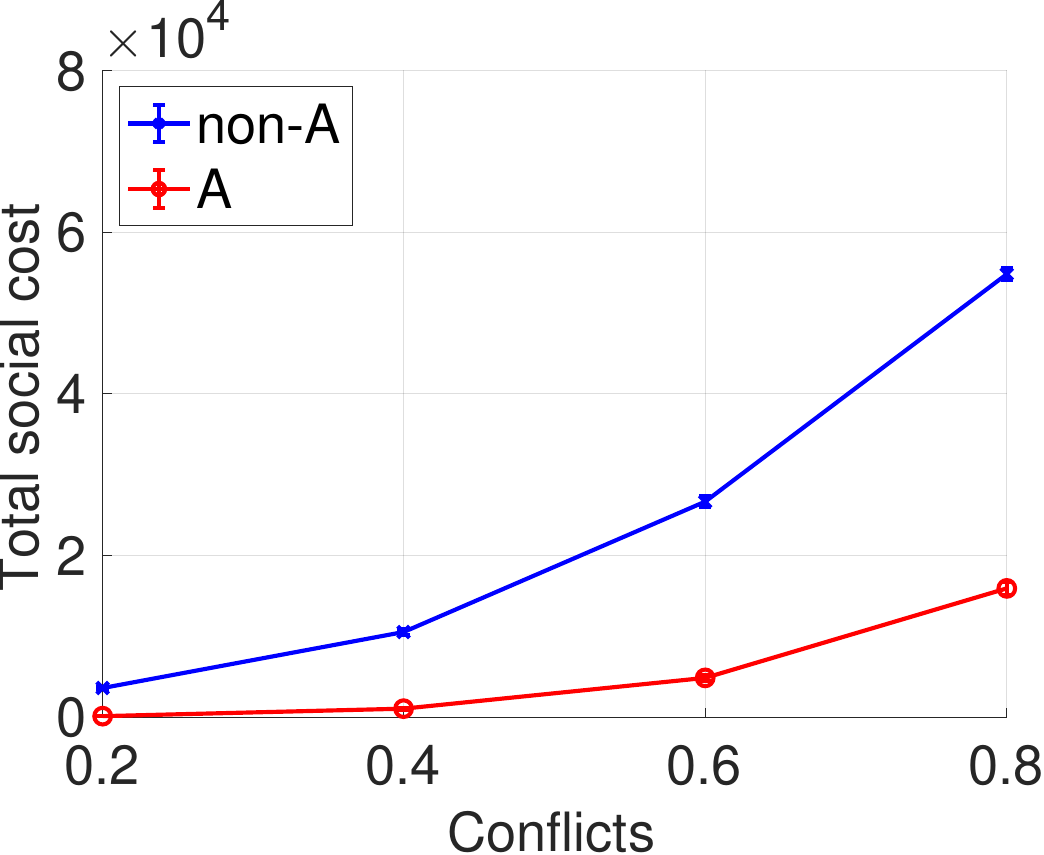}
        \label{fig:varyConflicts-sub1}
    } \hspace{2mm}  
    \subfloat[$n = 126$.]{%
        \includegraphics[width=0.30\textwidth]{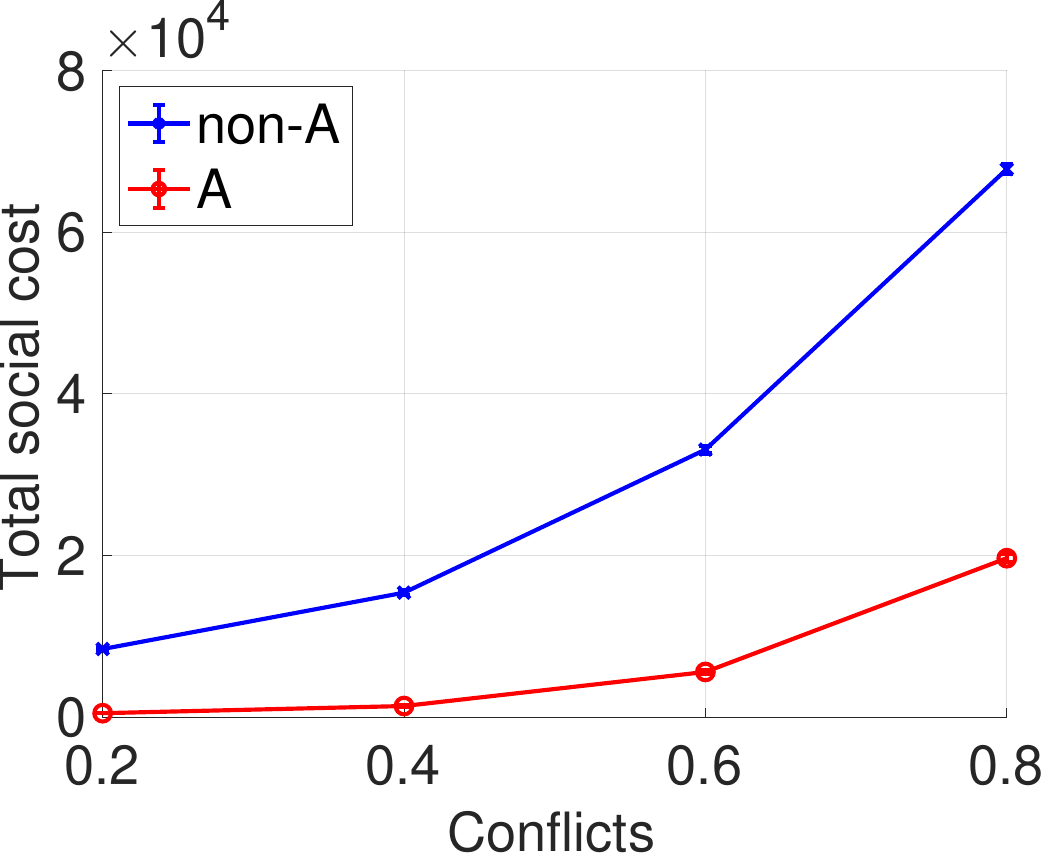}
        \label{fig:varyConflicts-sub2}
    } \hspace{2mm}  
    \subfloat[$n = 170$.]{%
        \includegraphics[width=0.30\textwidth]{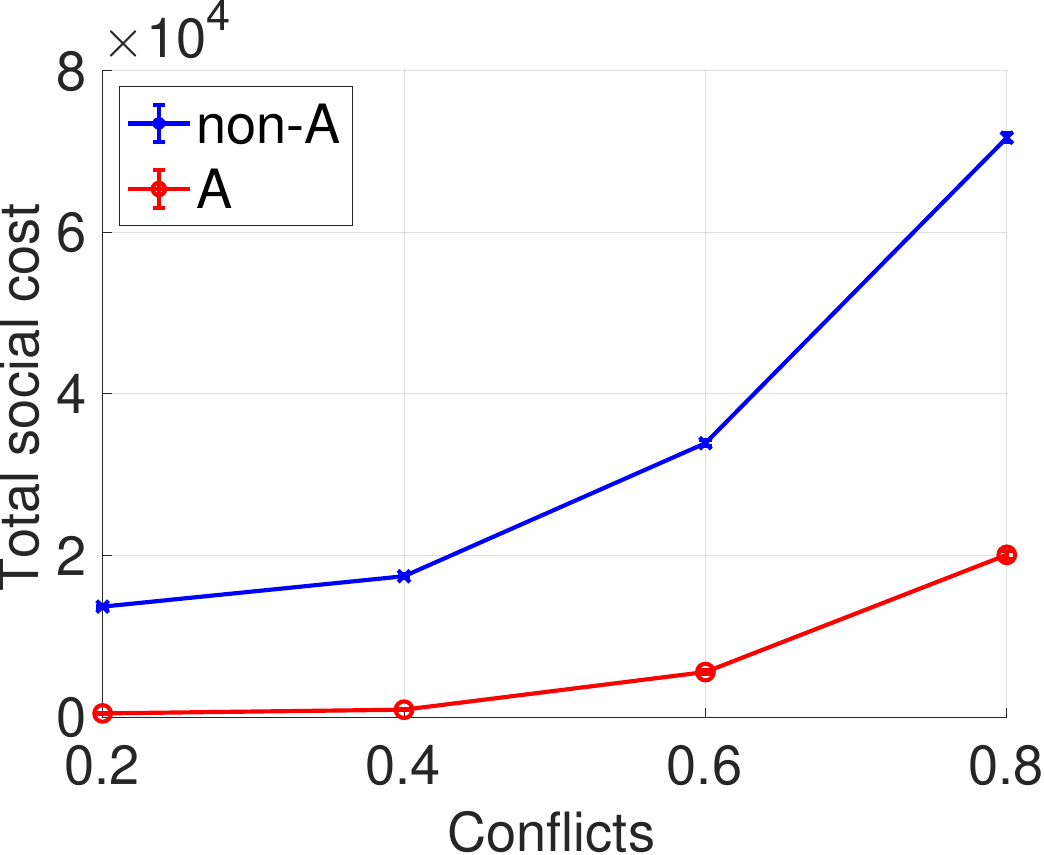}
        \label{fig:varyConflicts-sub3}
    } 

    \centering
    \subfloat[$n = 68$.]{%
        \includegraphics[width=0.30\textwidth]{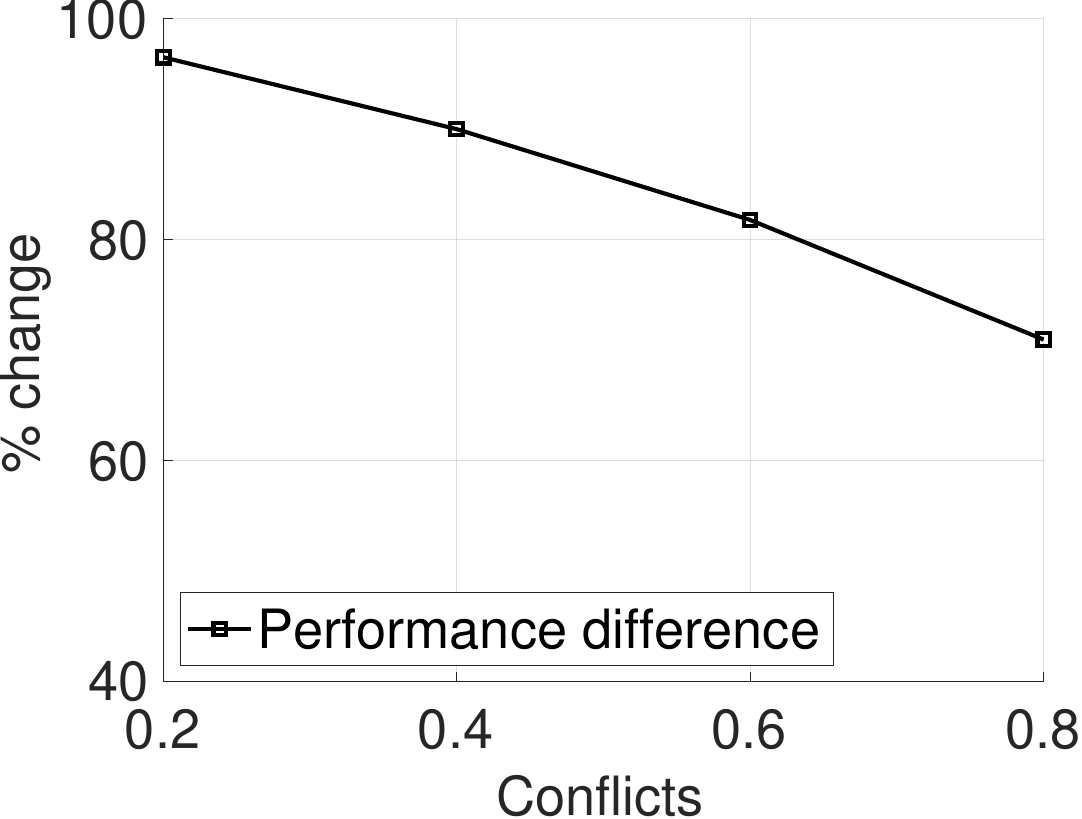}
        \label{fig:sub1}
    } \hspace{2mm}  
    \subfloat[$n = 126$.]{%
        \includegraphics[width=0.30\textwidth]{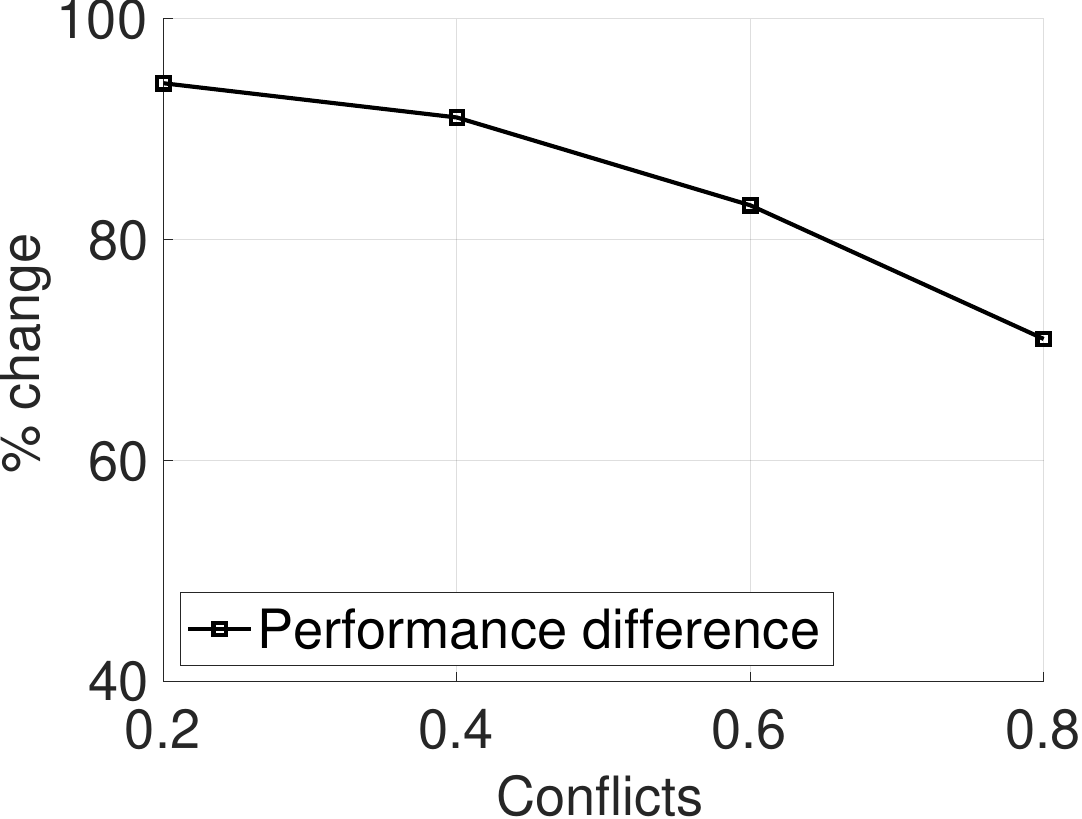}
        \label{fig:sub2}
    } \hspace{2mm}  
    \subfloat[$n = 170$.]{%
        \includegraphics[width=0.30\textwidth]{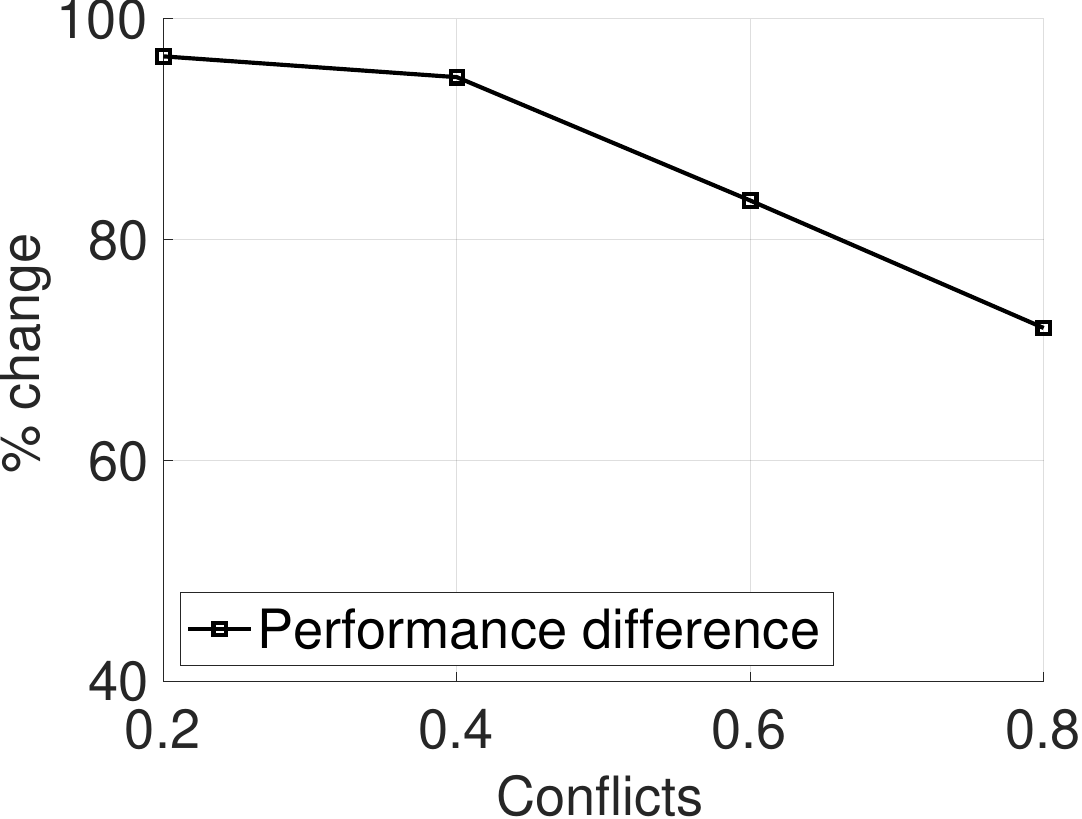}
        \label{fig:sub3}
    } 
    \caption{Total social cost and performance difference, for varying number of conflicts, with $|E| = \frac{n\cdot(n-1)}{2} \cdot \{ 0, 0.2, 0.4, 0.6, 0.8 \}$.}
    \label{fig:varyConflicts}
\end{figure*}

\begin{figure*}[t!]
    \centering
    \subfloat[$n = 68$.]{%
        \includegraphics[width=0.30\textwidth]{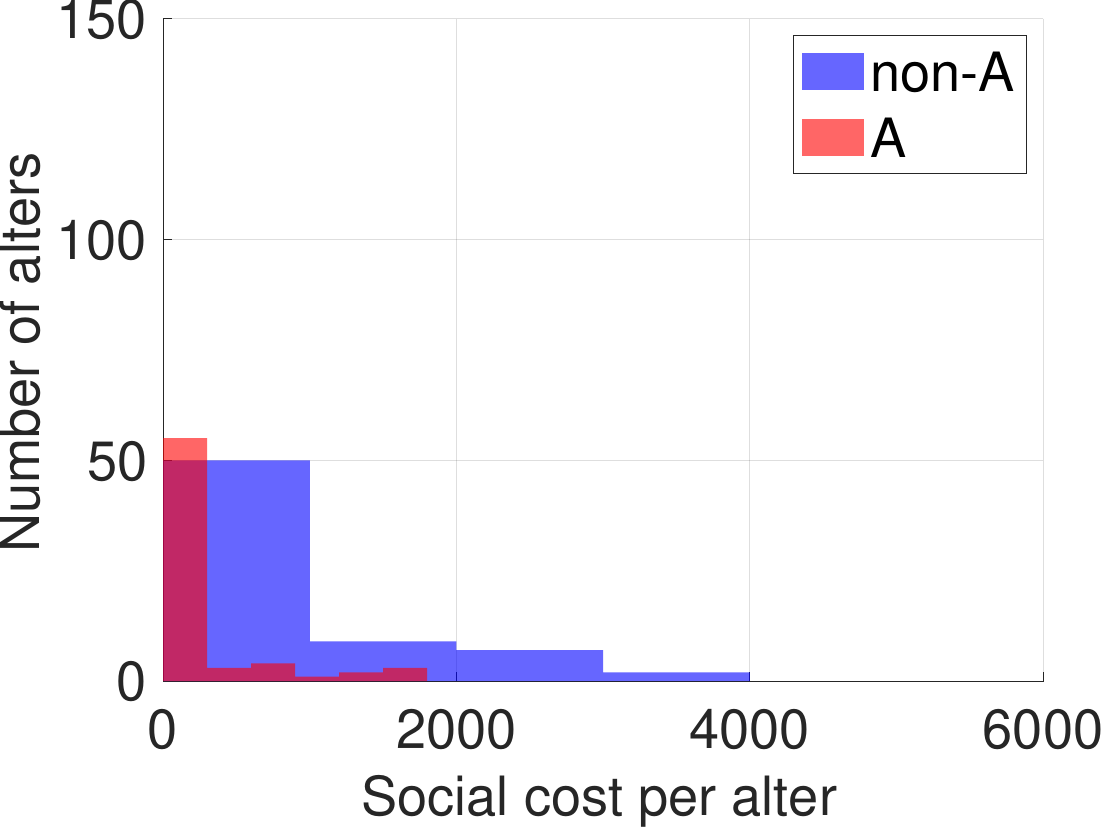}
        \label{fig:varyConflicts-histograms:sub1}
    } \hspace{2mm}  
    \subfloat[$n = 126$.]{%
        \includegraphics[width=0.30\textwidth]{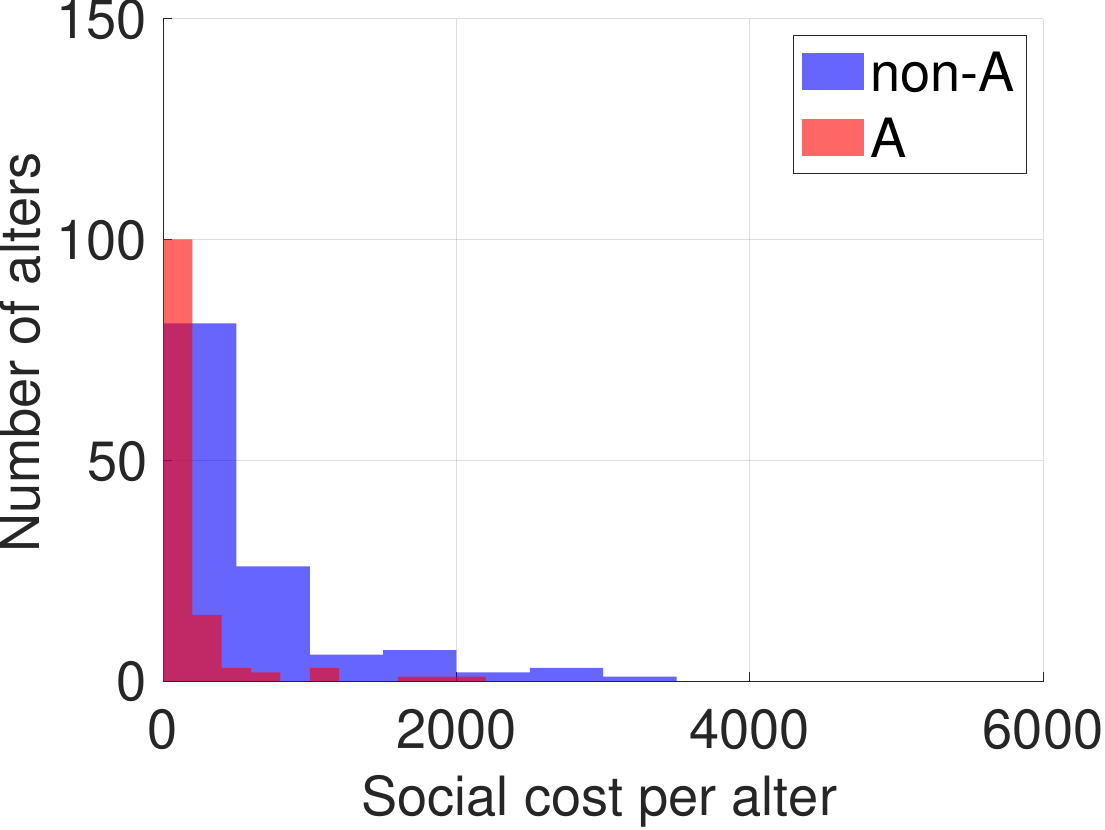}
        \label{fig:varyConflicts-histograms:sub2}
    } \hspace{2mm}  
    \subfloat[$n = 170$.]{%
        \includegraphics[width=0.30\textwidth]{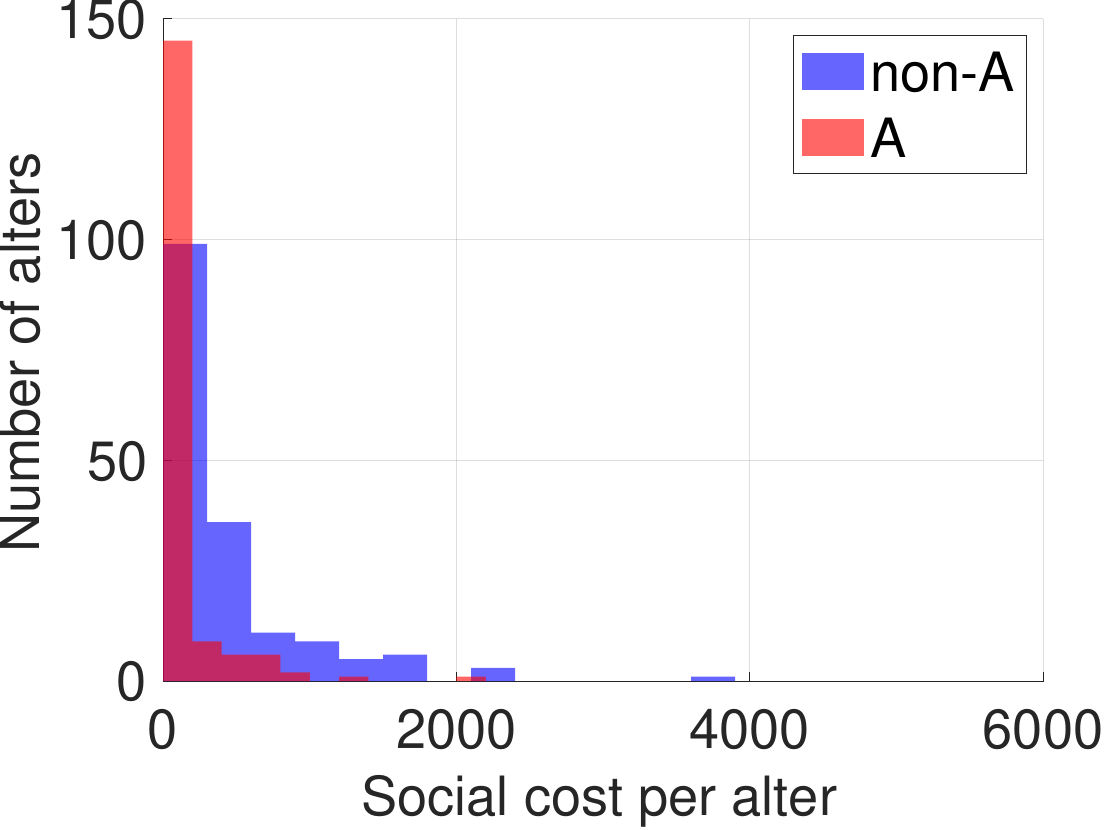}
        \label{fig:varyConflicts-histograms:sub3}
    } 
    \caption{Social cost per alter distribution.}
    \label{fig:varyConflicts-histograms}
\end{figure*}

\begin{figure*}[t!]
    \centering
    \subfloat[Varying deadlines.]{%
        \includegraphics[width=0.3\textwidth]{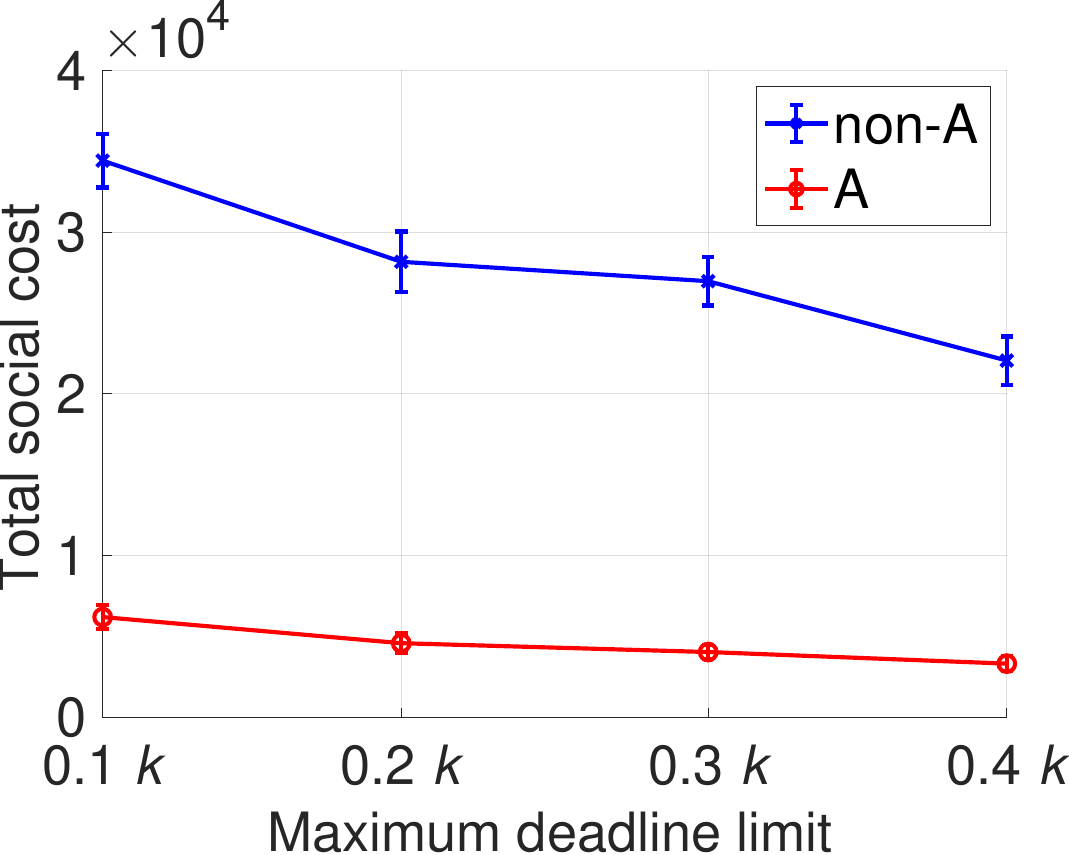}
        \label{fig:metrics-sub1}
    } \hspace{2mm}  
    \subfloat[Varying $Y$.]{%
        \includegraphics[width=0.3\textwidth]{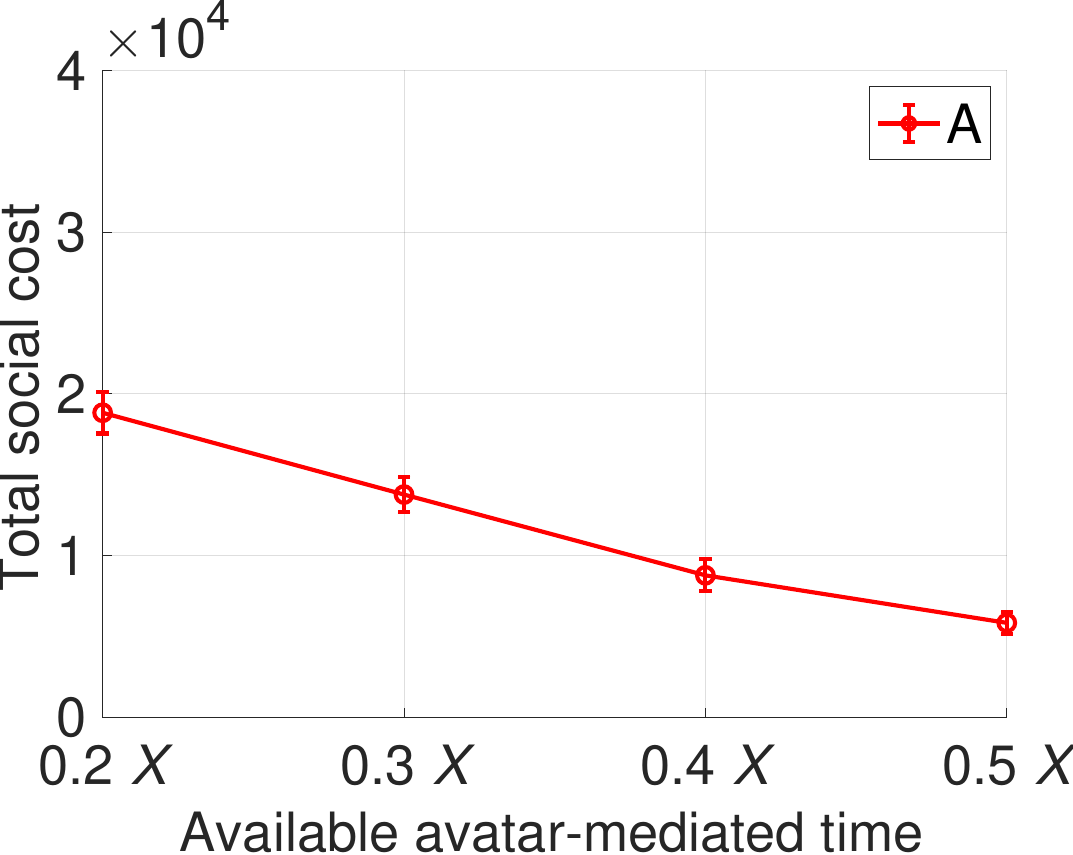}
        \label{fig:metrics-sub2}
    } \hspace{2mm}  
    \subfloat[Varying $\gamma$.]{%
        \includegraphics[width=0.3\textwidth]{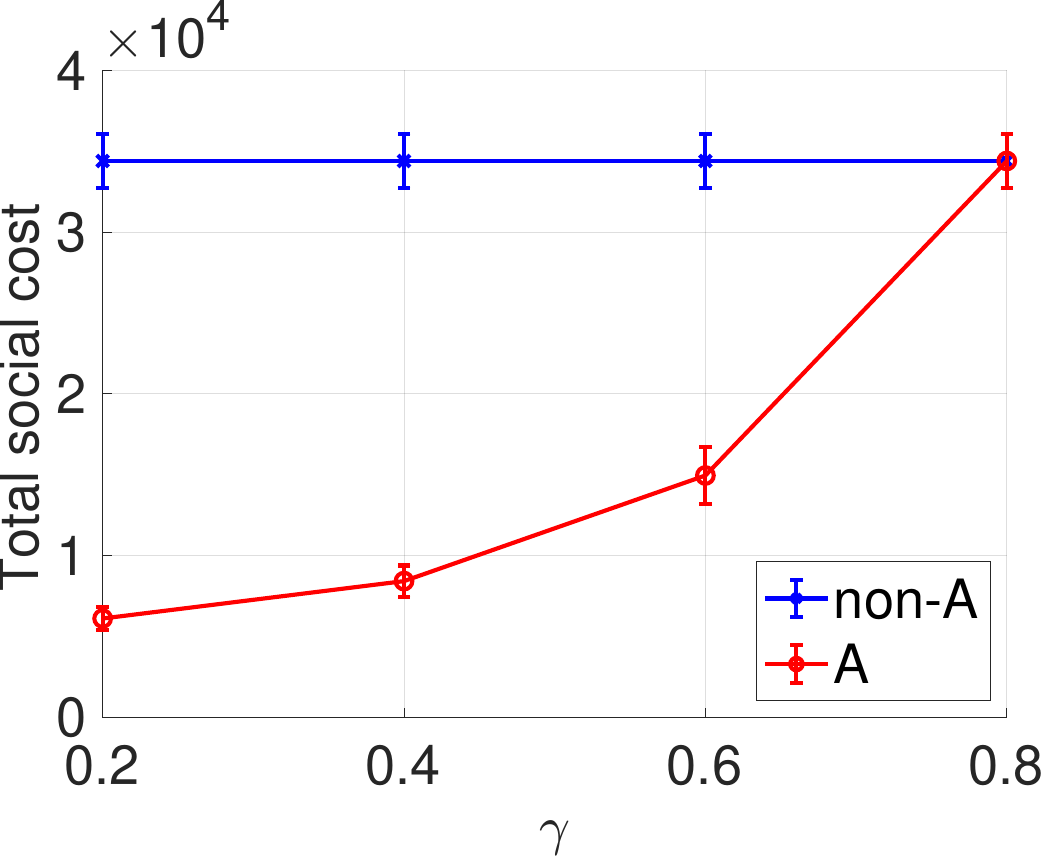}
        \label{fig:metrics-sub3}
    } 
    \caption{Varying parameters for $n = 126$.}
    \label{fig:metrics}  \vspace{-20pt}
\end{figure*}

Simulations were conducted in MATLAB (v. R2022b) to compare two distinct scenarios: the non-avatar case (non-A), where users interact without any virtual proxy, and the avatar case (A), where users can leverage the avatar by using our heuristic algorithm to assist in social interactions. All simulation parameters are fixed, in general, according to the values reported in Table~\ref{tab:exp-param}. We set $\tilde{X}' = \tilde{X}$. During the experimental campaign, we vary selected parameters to analyze their influence on system behavior. Specifically, we explore variations in the number of conflicts, the available avatar time, the maximum deadline limit, and the parameter $\gamma$. The impact of these variations is evaluated in terms of social cost, which serves as our primary performance metric and reflects the overall efficiency and coordination achieved under each scenario.

\begin{table}[b!]
    \centering
    \caption{Caption}
    \label{tab:exp-param}
    \begin{tabular}{r|l}
         Parameter & Value  \\\hline
        ego network size $|V|$ & $\{68, 126, 170\}$ alters  \\\hline
        avatar time $Y$ & $\{0.25, 0.5, 0.75, 1\} \cdot X $ hours  \\\hline
         $\beta$ & 1.29   \\\hline
         $\gamma$ & \{0.2, 0.4, 0.6, 0.8\}  \\\hline
         \# of conflicts $|E_c|$ & $\{0.2, 0.4, 0.6, 0.8\} \cdot n (n-1)/2$   \\\hline
         $Z_{max}$ bound & 304 hours \\\hline
         slot size $s(d_i)$ & 8 hours \\\hline
         $k$ & 364 days   \\\hline
         max deadline size $\max (d^{''j}_v - d^{'j}_v)$ & $\{0.1, 0.2, 0.3, 0.4 \} \cdot k$   \\\hline
    \end{tabular}
\end{table}

\subsection{Simulation results and discussion}

We now compare the non-avatar and avatar-mediated coordination cases across various experimental conditions. The analysis demonstrates the robustness and efficiency of avatar mediation in reducing social cost and improving scalability under diverse constraints.


Fig.~\ref{fig:varyConflicts} illustrates how social cost varies with the number of conflicts across different ego network sizes. In the non-avatar case, social cost increases sharply as conflicts rise. In contrast, the avatar-mediated case exhibits a more gradual increase, leading to a widening absolute gap in total cost between the two approaches. While this absolute difference grows with conflict density, the relative improvement (measured as the percentage reduction in social cost) tends to decrease. Specifically, the percentage advantage drops from approximately 95\% to 75\% as the number of conflicts increases. Nevertheless, this decreasing trend in relative gains is roughly consistent across all network sizes considered (68, 126, and 170), indicating that avatar-based mediation continues to provide significant benefits even under heavy load. Importantly, this suggests that avatars are especially valuable in avoiding cost escalation, although the marginal benefit flattens as size challenges scale up.

Fig.~\ref{fig:varyConflicts-histograms} shows the distribution of per-alter social cost. The avatar case clearly demonstrates a left-shifted distribution, indicating that a larger number of alters are served with lower individual costs. In contrast, the non-avatar distribution is broader and more symmetric, with a significant portion of alters experiencing moderate to high costs. This suggests that avatars are effective in flattening the social cost landscape, enabling a more equitable and optimized allocation of resources. In practical terms, this means more participants benefit from cost-efficient outcomes when avatars are involved in the socialization process.

In Fig.~\ref{fig:metrics}, we further examine system sensitivity under three key parameters: deadline strictness, available avatar time, and debriefing efficiency $\gamma$. As shown in Fig.~\ref{fig:metrics-sub1}, stricter deadlines substantially increase the social cost in the non-avatar case, reflecting limited room for socialization and higher missed opportunities. The avatar case, however, exhibits a less steep cost curve, showcasing its ability to absorb tighter constraints more gracefully by efficiently redistributing socialization load. Fig.~\ref{fig:metrics-sub2} depicts the relation between allocated avatar time and total social cost. Initially, increasing avatar time yields steep reductions in social cost, indicating high marginal utility. However, once avatar availability reaches approximately 50\% of the time budget used in the non-avatar scenario, the marginal benefit begins to plateau. This saturation point suggests that only a moderate investment in avatar time is needed to reap most of the benefits, offering an efficient trade-off between socialization and resource utilization. Finally, as shown in Fig.~\ref{fig:metrics-sub3}, the network performs well with avatar mediation for $\gamma$ values up to 0.6, showing a clear performance gain over the non-avatar case. However, as $\gamma$ approaches 0.8, the advantage diminishes, and performance converges. This implies that when the debriefing efficiency is low, the added value of the avatar becomes marginal, as natural convergence emerges without being able to introduce explicit mediation.


\section{Conclusions} \label{sec::conc}

This work proposed a computational framework for integrating independent avatars into Metaverse-based social networks, combining anthropological grounding with realistic constraints like social conflicts and time limits. After proving the NP-hardness of the problem, we introduced a heuristic to balance feasibility and cost. Simulations show that avatar mediation reduces social cost, improves fairness, and remains effective under strict deadlines and limited avatar time, defining a practical efficiency frontier. These findings support the role of avatars as scalable proxies for time-constrained users. Future work will explore learning-based coordination, trust dynamics, and heterogeneous avatar behaviors.

\ifCLASSOPTIONcaptionsoff
  \newpage
\fi




\balance
\bibliographystyle{IEEEtran}
\bibliography{IEEEabrv,bibtex/refs}
%

\end{document}